\documentclass[11pt]{article}
\usepackage{amsmath,amsfonts,amssymb}
\usepackage{amsthm}
\usepackage{fullpage}

\usepackage[numbers,sort]{natbib}

\theoremstyle{plain}
\newtheorem{theorem}{Theorem}[section]
\newtheorem{lemma}[theorem]{Lemma}
\newtheorem{corollary}[theorem]{Corollary}
\newtheorem{proposition}[theorem]{Proposition}

\theoremstyle{definition}
\newtheorem{definition}[theorem]{Definition}
\newtheorem{remark}[theorem]{Remark}

\usepackage{comment}
\usepackage{xspace}
\usepackage{ifthen}
\usepackage{graphicx}
\usepackage{bm}
\usepackage{multirow}
\usepackage{setspace}
\usepackage{verbatim}

\newcommand{\test}[1][]{%
\ifthenelse{\equal{#1}{}}{omitted}{given}%
}

\newenvironment{prevproof}[2]{\noindent {\em {Proof of
{#1}~\ref{#2}:}}}{$\blacksquare$\vskip \belowdisplayskip}

\newcommand{\sign}{\mathrm{sign^+}}

\newcommand{\eat}[1]{}

\newcommand{\R}{{\mathbb{R}}}
\renewcommand{\Pr}{{\textit{\bf Pr}}}
\newcommand{\E}{{\textit{\bf E}}}

\newcommand{\Maj}{{\mathrm{Maj}}}

\newcommand{\prob}[2][]{\text{\bf Pr}\ifthenelse{\not\equal{}{#1}}{_{#1}}{}\!\left[#2\right]}
\newcommand{\expect}[2][]{\text{\bf E}\ifthenelse{\not\equal{}{#1}}{_{#1}}{}\!\left[#2\right]}

\newcommand{\eps}{\epsilon}
\newcommand{\sse}{\subseteq}

\newcommand{\val}{v}
\newcommand{\vals}{{\mathbf \val}}
\newcommand{\valsmi}{{\mathbf \val}_{-i}}
\newcommand{\vali}[1][i]{\val_{#1}}

\newcommand{\vvali}{{\mathbf \val_i}}

\newcommand{\vv}{\varphi}

\newcommand{\vvi}[1][i]{\vv_{#1}}

\newcommand{\Val}{V}
\newcommand{\Vals}{{\mathbf \Val}}

\newcommand{\Vali}[1][i]{\Val_{#1}}

\newcommand{\alloc}{x}
\newcommand{\allocs}{{\mathbf \alloc}}

\newcommand{\alloci}[1][i]{\alloc_{#1}}

\newcommand{\ialloc}{y}
\newcommand{\iallocs}{{\mathbf \ialloc}}

\newcommand{\ialloci}[1][i]{\ialloc_{#1}}

\newcommand{\dens}{f}

\newcommand{\densi}[1][i]{{\dens_{#1}}}

\newcommand{\dist}{F}
\newcommand{\dists}{{\mathbf \dist}}
\newcommand{\distsmi}{{\mathbf \dist}_{-i}}
\newcommand{\disti}[1][i]{{\dist_{#1}}}

\newcommand{\bid}{b}
\newcommand{\bids}{{\mathbf \bid}}

\newcommand{\price}{p}
\newcommand{\prices}{{\mathbf \price}}
\newcommand{\pricei}[1][i]{\price_{#1}}

\newcommand{\iprice}{q}
\newcommand{\iprices}{{\mathbf \iprice}}
\newcommand{\ipricei}[1][i]{\iprice_{#1}}

\newcommand{\vsd}{{\vals \sim \dists}}
\newcommand{\xu}{{x \sim \pmo^n}}
\newcommand{\xo}{{x \sim \zo^n}}

\newcommand{\mech}{(\allocs,\prices)}
\newcommand{\rf}{(\iallocs,\iprices)}

\newcommand{\env}{\mathcal{E}}
\newcommand{\Setting}{\Pi}
\newcommand{\F}{\mathcal{F}}
\newcommand{\C}{\mathcal{C}}
\renewcommand{\L}{\mathcal{L}}
\newcommand{\RR}{\mathbb{R}}
\newcommand{\x}{\mathbf{x}}
\newcommand{\w}{\mathbf{w}}
\newcommand{\OPT}{\mathsf{OPT}}

\newcommand{\gbt}{generalized Border's theorem\xspace}
\newcommand{\gbts}{generalized Border's theorems\xspace}
\newcommand{\lis}{linear inequality system\xspace}
\newcommand{\mem}{{\sc Membership}\xspace}
\newcommand{\optrev}{{\sc OptRev}\xspace}
\newcommand{\optwel}{{\sc OptWel}\xspace}
\newcommand{\stconn}{\ensuremath{\mathsf{\#stConnectivity}\xspace}}

\newcommand{\sharpP}{\ensuremath\mathsf{\#P}}
\newcommand{\Ptime}{\ensuremath\mathsf{P}}
\newcommand{\NP}{\ensuremath\mathsf{NP}}
\newcommand{\coNP}{\ensuremath\mathsf{coNP}}
\newcommand{\PNP}{\ensuremath\mathsf{P^{NP}}}

\newcommand{\Part}{\ensuremath\mathsf{Partition}}
\newcommand{\sPart}{\ensuremath\mathsf{\# Partition}}
\newcommand{\zo}{\{0,1\}}
\newcommand{\pmo}{\ensuremath \{ \pm 1\}}

\markboth{Gopalan, Nisan, and Roughgarden}{The Borders of Border's Theorem}

\title{Public Projects, Boolean Functions,\\ and the Borders of Border's Theorem}
\author{
Parikshit Gopalan\\
%\footnotemark[1]
Microsoft Research
\and
Noam Nisan\footnote{Partially supported by ISF grants 230/10 and 1435/14 administered by the Israeli Academy of Sciences.}\\
%\footnotemark[2]
Hebrew University \& \\
Microsoft Research
\and
Tim Roughgarden\footnote{Supported in part by NSF grant and CCF-1215965.}\\
%\footnotemark[3]
Stanford University
}

\begin{document}
\maketitle

\begin{abstract}
Border's theorem gives an intuitive linear characterization of the
feasible interim allocation rules of a Bayesian single-item
environment, and it has several applications in economic and algorithmic
mechanism design.  All known generalizations of Border's theorem
either restrict attention to relatively simple settings, or resort to
approximation.  This paper identifies a complexity-theoretic barrier
that indicates, assuming standard complexity class separations,
that Border's theorem cannot be extended significantly beyond the
state-of-the-art.  We also identify a surprisingly tight connection
between Myerson's optimal auction theory, when applied to public
project settings, and some fundamental results in the analysis of
Boolean functions.
\end{abstract}

\maketitle

\section{Introduction}

Let us start by considering the computational complexity of a basic problem in
probability theory: characterizing the possible vectors of marginal
%(or conditional) 
probabilities in a probability space.  Questions of this type have already been asked by George Boole in the 19th century (see \cite{P94}) under the name
``conditions of possible experience''.
Here is a simple but very relevant special case that we will focus on:
for which vectors $(p_0, p_1, \ldots, p_n)$
does there exist a probability space with events $E,X_1,\ldots,X_n$, with
$X_1,\ldots,X_n$ independent with $\Pr[X_i]=1/2$ for all $i=1,2,\ldots,n$, 
such that $p_0 = \Pr[E]$ and $p_i= \Pr[E|X_i]$ for all $i=1,2,\ldots,n$? 

The reader may pause for a second here and convince themselves that this
is not a trivial question: for example, $\Pr[E]=1/2$ and
$\Pr[E|X_1]= \Pr[E|X_2]=0.7$ is possible while $\Pr[E]=1/2$ and
$\Pr[E|X_1]= \Pr[E|X_2]=0.8$ is not possible!  

%It is not difficult to see
%that the set of possible vectors  
%$(p_0, p_1, ..., p_n)$ is a polytope which we will call (for reasons to be explained below) the {\em Chow Polytope}.  We provide two basic results
%regrading the complexity of this polytope.

\subsection{Relevance to Bayesian Mechanism Design}

%1.2 Informal discussion about how the problem shows up in
%optimal auction design.  
%Concrete examples: single-item auctions; public projects.
%Border's theorem and why it is interesting.  
%Only minor (exact) extensions are known.
%Not understood for other cases, even single-parameter
%settings where welfare-maximization is easy: public projects,
%matroids, matchings, etc.

Why would anyone care about this problem?  One motivation comes from
mechanism design, specifically the problem of characterizing the set
of feasible interim allocation rules.
To explain, recall that in a generic Bayesian mechanism design
problem, a principal faces $n$ strategic agents, each holding some
private information, termed its {\em type}, where the tuple of types
is distributed according to some known prior distribution. 
The mechanism must specify an {\em (ex post) allocation rule}: for
each possible 
profile of agent types, an outcome chosen from some family of possible
outcomes or, more generally, a probability distribution over outcomes.
The description of a mechanism is thus naturally exponential in the
number of agents, even if there are only two possible types per agent
and two outcomes.
This exponential description size is the first reason that mechanism
design problems are difficult both mathematically and computationally.
In particular, while it turns out that most mechanism  design problems
of interest are easily expressed as linear programs,
computational efficiency does not follow due to the exponential size
of these linear programs.

There is still hope, however: the goals and constraints of Bayesian
mechanism design problems typically depend only on the marginals of
the allocation rule, also known as the {\em interim allocations}: for
each possible type of each agent, the average outcome over the types
of the other players.   
%This is true for
%players' expected utilities and payments as well as the natural incentive constraints and rationality
%constraints and thus also the expected welfare and revenue.  
This reduces the numbers of variables and constraints to be linear in the
number of players rather than exponential.  One would naturally hope
that analysis in terms of the interim allocations helps
the mathematical understanding of the problem,
in addition to computational tractability.  

This is {\em almost} the case.  The only rub is the combinatorial
issue of which interim allocation rules are {\em feasible} ---
for which values of the interim allocation probabilities do there
exist values of the (ex post) allocation probabilities with these
marginals.  Checking feasibility of an interim allocation rule is an
instance of verifying the consistency of a collection of marginals, the
problem described above.

\citet{MR84} were the first to highlight the importance of
characterizing feasible interim allocation rules; their motivation
was to develop an analog of Myerson's characterization of optimal
single-item auctions with risk-neutral bidders~\cite{M81} for the case
of risk-adverse bidders.
\citet{M84} proposed an intuitive necessary condition and conjectured
that it is sufficient for feasibility,
and Border~\cite{B91} proved this conjecture.\footnote{For the finite
  version of Border's theorem that we consider (see
  Section~\ref{ss:border}), there are also more combinatorial
  proofs~\cite{B07,CKM13}.}
For further applications and interpretations in economics,
see~\cite{G13,HR14,M11}.

Border's theorem also has computational implications.  As a linear
characterization that uses only ``simple'' linear inequalities, it
implies that checking the feasibility of an interim allocation rule
is a $\coNP$ problem (assuming finite type-spaces and explicitly given
type distributions). 
In simultaneous and independent works,
\citet{A+12} and \citet{CDW12} show that the problem is in fact in
$\Ptime$.

%This problem has 
%received significant attention in the Economics literature 
%starting with Maskin and Riley (1984), Matthews (1984), Border (1991),
%Border (2007), and more 
%recent results in Mierendorf (2011), Che et al (2013), Gershkov et al
%(2013) and Hart and Reny (2014).  It has also received intense  
%recent attention in the Computer Science literature 
%that has indeed been able to leverage this concise representation for
%computational efficiency, e.g. see Alaei (2011), Cai et al (2012A),
%Cai et al (2012B)  
%and many references within.   

%The central result is that of Border (1991) who considered the
%paradigmatic problem of auctioning a single item between 
%$n$ bidders.  Border identified an 
%easily described set of constraints that fully characterizes the
%feasible interim allocation probabilities.  
%%While the original
%%proof was somewhat mysterious, a transparent proof 
%%based on linear programming duality was provided in Border (2007),
%%which computer scientists might naturally interpret as a max-flow-min-cut argument.  

To what extent can Border's theorem be generalized to other mechanism
design problems?  This question has been the focus of much of the
recent work in algorithmic mechanism design since, as explained above,
it lies at the heart of the efficient computational treatment of  
multi-player mechanism design challenges.  
The current state of knowledge, discussed in detail in
Section~\ref{ss:related}, is that there are analogs of Border's
theorem in settings modestly more general than single-item
auctions, such as $k$-unit auctions with unit-demand
bidders~\cite{A+12,CDW12}, and that approximate versions of Border's 
theorem exist fairly generally~\cite{CDW12,CDW12b}.

%While some slight
%generalizations were obtained in ??,  
%attempted further generalizations in ?? were neither able to obtain 
%clear mathematical structure nor exact results, but still sufficed 
%for obtaining computationally-tractable {\em approximately}-optimal
%{\em approximately}-truthful randomized mechanisms for wide classes of
%mechanism design problems. 

Can the state-of-the-art be improved upon?  Can we provide
computationally useful exact extensions of Border's theorem?

\subsection{Summary of Results: The Borders of Border's Theorem}

%1.3 Informal summary of results: auctions.
%-use complexity theory to explain why efforts to extend Border's
%theorem to more general settings have failed
%-a Border's-type theorem exists only if computing the OPT revenue
%(exactly) is relatively easy (in \PNP)
%-except for the special cases that are already known (additive
%valuations, partition matroid-type constraints), compute OPT revenue
%exactly is #P-hard (and hence not in \PNP under standard
%assumptions) and hence no Border's-type theorem exists
%-identifies a new barrier for optimal auction design, that is relevant
%even in cases where the problem is viewed as "completely solved"
%(i.e., where Myerson applies, and where the optimal mechanism can be
%implemnted in polynomial time)

The first main take-away from this paper is that, under
widely-believed complexity assumptions, Border's theorem cannot be
extended significantly beyond the state-of-the-art.  For example, our
negative results imply the (conditional) impossibility of an exact
Border's theorem even for the following extremely simple mechanism
design setting.

\begin{definition}[Boolean Public Project Problem]
In the {\em Boolean public project} mechanism design problem, there
are only two possible social outcomes: ``yes'' and ``no.''  (E.g.,
build a bridge, or not.)
Each of the $n$ players has valuation $0$ for ``no.'' 
Player $i$'s value for ``yes'' is either 0 or $a_i$; the $a_i$'s are
publicly known, but only player $i$ knows which of 0 or $a_i$ is its
true value for ``yes.''
The distribution on the players' types is
uniform, with all $2^n$ possibilities equally likely.
\end{definition}

Key to our approach is the study of the computational complexity of the
\optrev problem: given a description of 
a mechanism design problem --- for each agent $i$ and type $t_i$, the
probability that $i$'s type is $t_i$ --- compute the maximum expected
revenue obtained by a mechanism that is Bayesian incentive-compatible
and interim individually rational.  (See Section~\ref{s:prelim} for formal
definitions.)  
We prove in Theorem~\ref{t:pp} that the \optrev problem is $\sharpP$-hard\footnote{In this paper all our hardness results are under general Turing reductions.}  
for Boolean public project problems.  A similar result has been independently proven by Yang Cai (private communication).
We note that the \optrev problem is hard despite the setting's status
as ``completely solved'' from a revenue-maximization perspective:
Myerson's optimal auction theory~\cite{M81} tells us exactly what the
optimal auction is (always pick the outcome with the highest sum of ``virtual
values''), and this auction is trivial to implement (since virtual
values are trivial to compute).
Thus while we know what to do (run Myerson's optimal auction) and it is
computationally efficient to do it, it's intractable to compute
(exactly) what our expected revenue will be!\footnote{Solving the
  \optrev problem is non-trivial even when you know what the optimal
  mechanism is, because the expectation is over the exponentially many
  type profiles.}

But so what?  %Presumably computing the expected revenue of a mechanism
%is just means to the ends of solving the problem we really care about,
Isn't 
identifying the optimal mechanism the problem we really care about?  The point is this
(Theorem~\ref{t:main}): %unless $\sharpP \sse \PNP$,
{\em
a generalization of Border's theorem (defined formally in
Section~\ref{ss:gbt}) would imply that the \optrev
problem is relatively tractable, formally within the complexity class $\PNP$.}
Combining this result with our $\sharpP$-hardness result for the \optrev
problem for Boolean public projects rules out an analog of Border's
theorem for that setting, unless $\sharpP \sse \PNP$, which is widely believed not to be the case.\footnote{Recall that
$\PNP$ denotes the problems that can 
be solved in polynomial time using an $NP$ oracle.  Toda's theorem implies that if $\sharpP \sse
\PNP$, then the polynomial hierarchy collapses to $\PNP$.}

This impossibility result is not an artifact of the fact that 
Boolean public project settings are not ``downward-closed''.  For example,
we can rule out a generalized Border's theorem (here and below, assuming $\sharpP
\not\sse \PNP$) for the setting of single-minded bidders with known
bundles, even when all bundles have size~2 (Theorem~\ref{t:sm}).  The
same reduction rules out a Border's-type theorem for multi-item auctions
with unit-demand bidders (Corollary~\ref{cor:multi}).  Analogous
hardness results even apply to the mathematically well-behaved
class of matroid environments, including
graphical matroids (Theorem~\ref{t:matroid}).

Taken together, our negative results suggest that Border's theorem
cannot be extended significantly beyond the cases already identified
in~\cite{B91,A+12,CDW12,CKM13} without resorting to
approximation (as in~\cite{CDW12,CDW12b}).
In particular, computationally useful Border's-type characterizations
are apparently far rarer than computationally efficient
characterizations of optimal auctions (as in~\cite{M81}).

\subsection{Summary of Results: Applications to Boolean Function Analysis}

%1.4 Informal summary of results: Boolean functions.
%-novel connection between optimal auction design in the public project
%problem and the analysis of Boolean functions
%-e.g., "new" proof that halfspaces maximizes weighted sum of
%influences (special case: majority maximizes sum of influences)
%-our results also imply that the "Chow polytope" has no good
%characterization, etc.
%-while not necessary to understand the rest of the paper, connection is both
% surprising and potentially helpful (has to been to us, at least)

The second main take-away of this paper is orthogonal to
the first: there is a surprisingly tight connection between classical
optimal auction theory, especially in the setting of Boolean public
projects, and the analysis of Boolean functions.

To see this, here is yet another formulation of the problem stated at the beginning
of the paper, this time in the
setting of the Fourier transform of Boolean functions.  For 
a Boolean function  $f:\{0,1\}^n \rightarrow \{0,1\}$, we will consider its 
{\em Chow  parameters} (a.k.a.\ level $0$ and $1$ Fourier
coefficients), which are defined as 
\begin{align*}
\hat{f}(0) & = \expect[\xo]{f(x)};\\
\hat{f}(i) & = \expect[\xo]{f(x)(-1)^{1+x_i}} \ \text{for} \ i \in [n].
\end{align*}
We can convexify this set by considering all bounded functions
$f:\zo^n \rightarrow [0,1]$.  The space of Chow parameters
for such functions is a convex polytope which we denote  by $\C_n$.

We observe that the space of feasible interim allocation rules
for the Boolean public project setting (given by the conditional allocations)
is essentially equivalent to the
the space of feasible Chow parameters for functions $f:\zo^n
\rightarrow [0,1]$. This lets us reinterpret our hardness results for
the setting of Boolean Public Projects in the language of Chow parameters as well:
We can view the problem of maximizing revenue as maximizing a
weighted sum of the form $\sum_{i=0}^na_i\hat{f}(i)$ over $\C_n$.  
Theorem \ref{thm:chow} shows that this problem is $\sharpP$-hard,
and hence testing membership of a vector in $\C_n$ is also $\sharpP$-hard.   
At this point we can also return to the problem regarding marginal
probabilities that this paper started with and observe that this also implies its $\sharpP$-hardness.

This equivalence is also useful in the converse direction and provides, in lemma \ref{lem:chow}, 
a simple alternative stand-alone analysis of the \optrev problem
for the Boolean Public Project setting that is based on simple analysis of Boolean functions 
rather than relying on Myerson's analysis of optimal auctions.
%This implies that any characterization of $\C_n$ via a set of efficiently-computable constraints --
%such as would be  implied by a Border-like theorem for this problem --  
%is impossible (unless $\PNP = \sharpP$). 
%We also present a 1-1
%correspondence between the vertices of $\C_n$ and (the Chow parameters
%of) halfspaces on the Boolean hypercube (Theorem \ref{thm:polytope}).

\subsection{Related Work}\label{ss:related}

Three recent and independent papers ask to what extent Border's
theorem can be extended beyond the original setting of single-item
auctions~\cite{B91} and provide some positive results.
The results in this paper give senses in which their results
are close to the best possible.

\citet{A+12} give an analog of Border's theorem for every
single-parameter matroid environment, in the form of an
exponential-size set of linear inequalities that characterizes the
feasible interim allocation rules.  
They illustrate some special
cases, such as $k$-unit auctions, in which this characterization can
be used to test the feasibility of a rule, and more generally optimize
over the set of all feasible rules, in polynomial time.  For general
matroids, their linear characterization uses inequalities for which
the right-hand side is $\sharpP$-hard to compute (this follows from the
reductions in the present work); thus for general matroids, their
characterization does not meet our notion of a ``\gbt''
(Definition~\ref{d:gbt}), and indeed by our Theorem~\ref{t:matroid}
it cannot (unless the polynomial hierarchy collapses).

The contributions of \citet{CKM13} relevant to present work are
similar in spirit to but technically different from those of
\citet{A+12}: they give an analog of Border's 
theorem for a class of multi-unit environments, involving
``paramodular'' constraints on the number of units each bidder can
get.\footnote{\citet{CKM13} define paramodularity as the upper bounds
  being submodular, the lower bounds being supermodular, and the two
  constraints being ``compliant,'' meaning irredundant in a certain sense.}
In general, this linear characterization uses inequalities that are
computationally
intractable to compute, but it includes some tractable special cases,
such as when the upper and lower bounds of the different bidders are
uncoupled.

\citet{CDW12} also identify some computationally tractable extensions
of Border's theorem, for example to the case of multi-item auctions
with additive bidders (i.e., no feasibility
constraints).\footnote{Both \citet{A+12} and \citet{CDW12} also give
  (different) compact 
formulations (i.e., polynomially many variables and constraints)
of the feasible interim allocation rules for single-item settings.}
\cite{CDW12,CDW12b} also develop a theory of ``approximate''
Border's-type theorems that encompasses a much wider swath of
mechanism design settings, including all of the concrete settings that
we consider in this paper.
Such approximate theorems identify a set
of linear inequalities with the property that every feasible solution
is close (in $\ell_{\infty}$ norm, say) to a feasible interim
allocation rule, and conversely.
\cite{CDW12,CDW12b} show that approximate versions of Border's
theorem are very useful for algorithmic mechanism design: roughly,
for an arbitrary constant $\eps > 0$, they enable the polynomial-time
computation of a mechanism that is approximately Bayesian incentive
compatible (up to an incentive of $\eps$ to misreport) and is
approximately revenue-optimal (up to a loss of $\eps$).  Our results
in this paper imply that the approximation approach taken in
\cite{CDW12,CDW12b} is unavoidable, in that no exact and
computationally useful analog of Border's theorem can exist for most
of the settings they consider (unless $\sharpP \sse \PNP$).

Our work shares some of the spirit of recent works that use complexity
to identify barriers in mechanism design.
For example, \citet{DDT14} consider a single-bidder multi-item (and
hence multi-parameter) setting,
and prove that it is $\sharpP$-hard to compute a description of the
revenue-maximizing incentive-compatible
mechanism.\footnote{\citet{DDT14} assume an additive bidder, with
the prior distribution over valuations encoded in the natural succinct way.}
Note this problem
is {\em not} hard in most of the (single-parameter) settings that we
consider, where the optimal mechanism is simple to compute and write down
(it is just a ``virtual welfare maximizer'' with virtual values given
by simple explicit formulas, as per
\citet{M81}).  What is hard for us is computing the 
expected revenue obtained by the (simple-to-describe) optimal
mechanisms, not the mechanism design problem per se.  
Still more distant from the present work are previous papers on the
intractability of computing optimal deterministic mechanisms in
various settings,
including~\cite{B08,C+14,DFK11,PP11}.

\section{Preliminaries}\label{s:prelim}

%2. Preliminaries
%-BIC mechanisms
%-interim allocation rules
%-membership vs. separation vs. optimization?

\subsection{Mechanism Design Settings}\label{ss:settings}

We recall first the standard model of binary single-parameter
mechanism design settings,
with players' valuations drawn from a commonly known product
distribution.
Formally, a {\em single-parameter environment} $\env$ consists of the
following 
ingredients: (i) a player set
$U = \{1,2,\ldots,n\}$; (ii) for each player~$i$, a finite set $\Vali$
of possible nonnegative {\em valuations}; (iii) for each~$i \in U$ and
$\vali \in 
\Vali$, a {\em prior} probability $\densi(\vali)$ that $i$'s valuation
is $\vali$; and (iv) a non-empty collection $\F \sse 2^U$ of {\em
  feasible sets.}
For example (see also Section~\ref{s:hard}):
\begin{enumerate}

\item A single-item auction (with $n$ bidders) corresponds to an
  environment with $\F = \{ \emptyset, \{1\}, \{2\}, \ldots, \{n \}
  \}$.

\item In a $k$-unit auction with unit-demand bidders, $\F$ is all
  subsets of $U$ with size at most $k$.

\item In a {\em public project} environment, $\F = \{ \emptyset, U
  \}$.

\item In a {\em single-minded} environment, the set $\F$
  is   described implicitly as follows.  There is a set $M$ of items
  and a
  subset $S_i \sse M$ desired by each bidder $i$.  A set $F \sse U$
  belongs to $\F$ if and only if no desired bundles conflict: $S_i
  \cap S_j = \emptyset$ for all distinct $i,j \in F$.  For example, if
  $|S_i|=2$ for every $i$, then feasible sets correspond to the
  matchings of a graph with vertices $M$ and edges correspondings to
  the $S_i$'s.

\end{enumerate}

We also consider multi-parameter environments.  Because
our primary contributions are impossibility results, we confine
ourselves to the simplest such environments (negative results
obviously apply also to generalizations).
A {\em multi-item auction} environment differs from a
single-parameter environment in the following respects.  First, there
is also a set $M$ of items.  Second, a valuation $\vvali$ of a bidder
$i$ is now a nonnegative vector indexed by $M$.  We restrict attention
to additive bidders, meaning that the value a player $i$ derives
from a set $S \sse M$ of items is 
just the sum $\sum_{j \in S} \val_{ij}$.
Third, $\F$ is now a subset of $2^{U \times M}$, indicating which
allocations of items to bidders are possible.  
For example, the standard setting of unit-demand bidders can be
encoded by defining $\F$ to be the subsets $F$ of $U \times M$ in which,
for every bidder $i$, there is at most one pair of the form $(i,j)$ in
$F$ (and also at most one such pair for each $j$).

By a {\em setting}, we mean a family of mechanism design environments.
For example, all single-item auction environments (with any number~$n$ of
players, any valuation sets, and any prior distribution); all public
project environments; all multi-item unit-demand auction environments;
etc.

\subsection{Bayesian Incentive-Compatible Mechanisms}\label{ss:bic}

This section reviews the classical setup of Bayesian mechanism design
problems, as in Myerson~\cite{M81}.
Fix a binary single-parameter environment, as defined in
Section~\ref{ss:settings}.
A (direct-revelation) {\em mechanism} $\mech$
comprises an {\em allocation rule} $\allocs:\Vals \rightarrow
\{0,1\}^n$ and a {\em payment 
  rule} $\prices:\Vals \rightarrow \RR^n_+$,
where $\Vals = \Val_1 \times \cdots \times \Val_n$.
The former is a map (possibly randomized) from each bid vector~$\bids$
--- with one bid per player --- to a characteristic vector of a feasible
set in $\F$, the latter 
is a map (possibly randomized) from each bid vector $\bids$ to a 
payment vector $\prices$, with one payment per player.  
For the questions we study, we can restrict attention to truthful
mechanisms (via the Revelation Principle, e.g.~\cite{N07}),
and we henceforth use the true valuations $\vals$ in place of the
bids $\bids$.

A mechanism $\mech$ and prior distribution $\dists$ over valuations
together induce an {\em interim allocation rule}
\begin{equation}\label{eq:iar}
\ialloci(\vali) = \expect[\valsmi \sim
  \distsmi]{\alloci(\vali,\valsmi)},
\end{equation}
which describes the probability (over the randomness in $\valsmi$ and
any randomness 
in $\allocs$) that bidder~$i$ is chosen when it reports the valuation
$\vali$.  Similarly, the {\em interim payment rule}
\begin{equation}\label{eq:ipr}
\ipricei(\vali) = \expect[\valsmi \sim
  \distsmi]{\pricei(\vali,\valsmi)}
\end{equation}
describes the expected payment made by~$i$ when it reports $\vali$.
The pair $\rf$ is the {\em reduced form} of the mechanism $\mech$.
%We assume that each player~$i$ is risk-neutral and has quasi-linear
%utility, meaning that its bids in order to maxmize $\vali$ times the
%probability it is selected by the mechanism, minus its expected
%payment.  
We sometimes call $\allocs$ and $\prices$ {\em ex post} rules 
for emphasis.

A mechanism $\mech$ for an environment is {\em Bayesian incentive compatible
  (BIC)} if truthful bidding is a Bayes-Nash equilibrium.
We assume that bidders are risk-neutral and have quasi-linear utility
(value minus payment), and can therefore use linearity of expectation
to write succinctly the BIC condition in terms of the reduced form
$\rf$ of a mechanism:
\begin{equation}\label{eq:bic}
\vali \ialloci(\vali) - \ipricei(\vali)
\ge
\vali \ialloci(\vali') - \ipricei(\vali')
\end{equation}
for every bidder $i$, true valuation $\vali$, and reported valuation
$\vali'$.

Similarly, we can express the {\em interim individual rationality
  (IIR)} requirement --- stating that truthful bidding 
leads to non-negative expected utility --- by
\begin{equation}\label{eq:iir}
\vali \ialloci(\vali) - \ipricei(\vali)
\ge
0
\end{equation}
for every bidder $i$ and true valuation $\vali$.

We can also write the seller's expected revenue
\begin{equation}\label{eq:rev}
\sum_{i=1}^n \sum_{\vali \in \Vali} \densi(\vali) \cdot \ipricei(\vali)
\end{equation}
in terms of the reduced form of the mechanism.
In the {\em optimal (revenue-maximizing) mechanism design   problem},
the goal is to identify the BIC and IIR mechanism $\mech$ that
maximizes~\eqref{eq:rev}.

Multi-item auction environments can be treated similarly, with
an (ex post) allocation rule choosing a feasible set of $\F$ for each
valuation profile $\vals$ and
$\alloc_{ij}(\vals)$ now denoting whether or not bidder~$i$ receives
the item~$j$.  (There is no need to keep track of a
separate payment 
for each item received.)  For example, because we assume that bidders
are additive, the BIC constraints for a
multi-item environment can phrased in terms of the reduced form $\rf$ of
a mechanism by
\begin{equation}\label{eq:bic2}
\sum_{j \in M} \val_{ij} \ialloc_{ij}(\vvali) - \ipricei(\vvali)
\ge
\sum_{j \in M} \val_{ij} \ialloc_{ij}(\vvali') - \ipricei(\vvali')
\end{equation}
for every bidder $i$, true valuation $\vvali$, and reported valuation
$\vvali'$.

\subsection{Border's Theorem for Single-Item Auction
  Environments}\label{ss:border}

As discussed in the Introduction, 
there are several applications that rely on understanding
when 
an interim allocation rule $\iallocs$ is induced by some ex post
allocation rule~$\allocs$.  Such interim rules are said to be {\em
  feasible}.
%XXX ELABORATE?

{\em Border's Theorem}~\cite{B91} 
characterizes interim feasibility for single-item
auction environments.  To review it, fix such an environment and
assume without loss of generality 
that the valuation sets $V_1,\ldots,V_n$ are disjoint.\footnote{We can
  enforce this by thinking of each $\vali \in \Vali$
as the pair $\{ \vali,i\}$.}
To derive an obvious necessary condition for feasibility, consider an
ex post allocation rule $\allocs$ with induced interim rule
$\iallocs$.
Fix for each bidder $i$ a set $S_i \sse V_i$ of {\em distinguished}
valuations. 
Linearity of expectation implies that the probability, 
over the random valuation profile
$\vals \sim \dists$ and any coin flips of the allocation rule
$\allocs$, that the winner of the item has a distinguished valuation
is
\begin{equation}\label{eq:lhs}
\sum_{i=1}^n \sum_{\vali \in S_i} \densi(\vali) \ialloc_{i}(\vali).
\end{equation}
The probability, over $\vsd$, that
there is a bidder with a distinguished type is
\begin{equation}\label{eq:rhs}
1 - \prod_{i=1}^n \left( 1 - \sum_{\vali \in S_i} \densi(\vali) \right).
\end{equation}
Since there can only be a winner with a distinguished type when there is
a bidder with a distinguished type, the quantity in~\eqref{eq:lhs} can
only be less than~\eqref{eq:rhs}.  Border's theorem asserts that 
satisfying these
linear (in $\iallocs$) constraints, ranging over all choices of $S_1
\sse V_1,\ldots,S_n \sse 
V_n$, is also a sufficient condition for the feasibility of an interim
allocation rule~$\iallocs$.

\begin{theorem}[Border's Theorem~\cite{B91}]\label{t:border}
In a single-item environment,
an interim allocation rule $\iallocs$ is feasible if and only if for
every choice $S_1 \sse V_1,\ldots,S_n \sse V_n$ of distinguished
types,
\begin{equation}\label{eq:border}
\sum_{i=1}^n \sum_{\vali \in S_i} \densi(\vali) \ialloc_{i}(\vali)
\le
1 - \prod_{i=1}^n \left( 1 - \sum_{\vali \in S_i} \densi(\vali) \right).
\end{equation}
\end{theorem}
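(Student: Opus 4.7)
The forward direction (necessity) is already established in the discussion preceding the theorem: for any ex post allocation rule, linearity of expectation identifies the two sides of~\eqref{eq:border} as the probabilities of the events ``the winner has a distinguished type'' and ``some bidder has a distinguished type,'' and the former event is contained in the latter.

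For sufficiency, my plan is to encode the existence of an ex post rule inducing the given interim rule $\iallocs$ as a transportation feasibility problem, and then apply the Hall-type feasibility criterion. Build a bipartite network with, on the source side, a node $(i,\vali)$ for each bidder $i$ and valuation $\vali \in \Vali$ carrying supply $\densi(\vali)\,\ialloci(\vali)$; on the sink side, a node $\vals$ for each profile $\vals \in \Vals$ with capacity $\prod_{j}\densi[j](\val_j)$; and an edge $(i,\vali)\to\vals$ whenever $\val_i=\vali$. Given any randomized ex post rule $\allocs$ with $\sum_i \alloci(\vals)\le 1$, setting the flow on $(i,\vali)\to\vals$ to $\prod_j \densi[j](\val_j)\cdot\alloci(\vals)$ yields outflow $\densi(\vali)\,\ialloci(\vali)$ at $(i,\vali)$ by~\eqref{eq:iar} and inflow at most $\prod_j \densi[j](\val_j)$ at $\vals$; conversely, any feasible flow saturating all supplies can be normalized at each sink to produce a randomized ex post rule with the prescribed interim marginals. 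Hence $\iallocs$ is feasible precisely when this transportation instance is feasible.

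By the standard max-flow/min-cut criterion, the transportation instance is feasible iff for every subset $S$ of sources, the total supply in $S$ is at most the total capacity of $N(S)$, the set of sinks adjacent to $S$. Since $N(S)$ depends only on the projections $S_i=\{\vali:(i,\vali)\in S\}$, any violation persists under replacement of $S$ by its product-form superset $\bigcup_i\{i\}\times S_i$ (which only raises the supply while leaving $N(S)$ unchanged), so it is enough to check product-form $S$. For such an $S$, the supply is exactly the left-hand side of~\eqref{eq:border}, while by independence the total capacity of $N(S)=\{\vals:\val_i\in S_i \text{ for some } i\}$ equals $\Pr_{\vsd}[\exists i:\val_i\in S_i] = 1 - \prod_i\bigl(1-\sum_{\vali\in S_i}\densi(\vali)\bigr)$, matching the right-hand side. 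The main subtlety will be the monotonicity step reducing arbitrary subsets $S$ to product-form ones; once that is handled, the Hall inequality and Border's inequality match term-by-term and the equivalence is complete.
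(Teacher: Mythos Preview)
The paper does not itself prove Theorem~\ref{t:border}: it argues only the necessity direction (which you correctly echo) and cites~\cite{B91} for sufficiency, noting in a footnote that combinatorial proofs appear in~\cite{B07,CKM13}. Your sufficiency argument is correct and is essentially that combinatorial approach---encode interim feasibility as a bipartite transportation problem and read Border's inequalities off from the max-flow/min-cut (Hall) criterion. So there is nothing in the paper to compare against beyond the necessity paragraph, and your proof supplies exactly the kind of argument the footnote alludes to.

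One minor remark: the ``monotonicity step'' you flag as the main subtlety is in fact vacuous in your encoding. Source nodes are pairs $(i,\vali)$, so every subset $S$ of sources already satisfies $S=\bigcup_i \{i\}\times S_i$ where $S_i=\{\vali:(i,\vali)\in S\}$; there is no strictly larger ``product-form'' set to pass to. Your key observation---that $N(S)$ depends only on the projections $S_i$---is correct and is all that is needed: the Hall constraints are thus indexed precisely by tuples $(S_1,\ldots,S_n)$, and for each such tuple the supply and the capacity of $N(S)$ coincide term-by-term with the two sides of~\eqref{eq:border}.
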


%Border's theorem yields an explict description as a linear program of
%the reduced forms of BIC, IIR, and feasible mechanisms.  To
%review, this linear program is

\section{Generalized Border's Theorems and Computational Complexity}

%3. Border's-type theorems require easy revenue maximization.

\subsection{Generalizing Border's Theorem}\label{ss:gbt}

%3.1 What we mean by Border's-type theorems.
%-formally state known cases
%-distinguish implicit vs. explicit, verifable vs. not, separation
% oracle vs. not
%-formal definition as linear and verifiable proof of non-membership

What do we actually mean by a ``Border's-type theorem?''  
Since we aim to prove impossibility results, we should adopt a
definition that is as permissive as possible.
Border's theorem (Theorem~\ref{t:border}) gives a characterization of
the feasible interim allocation rules of a single-item environment as
the solutions to a finite system of linear inequalities.  This by
itself is not interesting --- since the set is a polytope,\footnote{The
  set of ex post allocation rules is a polytope, and the feasible
  rules are the image of this polytope under a linear map (and hence
  also a polytope).} it is guaranteed to have such a characterization
(see e.g.~\cite{ziegler}).  The appeal of Border's theorem is that the
characterization uses only the ``nice''
linear inequalities in~\eqref{eq:border}.
Our ``niceness'' requirement is that the characterization use only
linear inequalities that can be efficiently recognized and tested.
This is a weak necessary condition for such a characterization to be
computationally useful.

\begin{definition}\label{d:gbt}
A {\em
  \gbt} holds for the mechanism design setting~$\Setting$ if, for every
environment $\env \in 
\Setting$, 
there is a \lis $\L(\env)$ 
such that the following properties hold.
\begin{enumerate}

\item (Characterization) For every $\env \in \Setting$, the feasible
  solutions of $\L(\env)$ 
  are precisely the feasible interim allocation rules of $\env$.

\item (Efficient recognition) There is a polynomial-time algorithm
  that, given as input a 
  description of an environment $\env \in \Setting$ and a
linear inequality, decides whether or not it belongs to
$\L(\env)$.\footnote{Note this is a weaker assumption than requiring the
  efficient recognition of an arbitrary valid inequality.}
Note the description length of $\env$ is polynomial in $n$, the
$|\Vali|$'s, and the maximum number of bits needed to describe a 
valuation or a prior probability.  

\item (Efficient testing)
There is a polynomial $p(\cdot)$ such that,
for every $\env \in \Setting$, the 
natural encoding length of every 
inequality of $\L(\env)$ is at most $p(\ell)$, where $\ell$
is the description length of $\env$.  
(The number $|\L(\env)|$ of inequalities can still be exponential.)
Thus, deciding whether or not a given point $\x \in \RR^n$ satisfies a
given inequality of $\L(\env)$ can be done in time polynomial in the
descriptions of $\env$ and $\x$, just by computing and comparing the
two sides of the inequality.

%\item (Efficient testing) There is a polynomial-time algorithm that,
%  given as input a 
% description of an environment $\env \in Setting$,
%an encoding of a linear inequality of $\L(\env)$, 
%and a point $\x \in \RR^n$, decides whether or not the point satisfies
%the inequality.

\end{enumerate}
\end{definition}

For example, consider the original Border's theorem, for single-item 
auction environments (Theorem~\ref{t:border}).
%The right-hand side of an inequality of the form~\eqref{eq:border} can
%be computed (from the description of $\env$) in polynomial time,
%and each inequality has polynomial description length.
The recognition problem is straightforward: the left-side
of~\eqref{eq:border} encodes the $S_i$'s, from which the right-hand
side can be computed and checked in time polynomial in the description
of $\env$.  It is also evident that the inequalities
in~\eqref{eq:border} have a polynomial-length description.
%The testing problem is trivial: 
%given a linear inequality of the form~\eqref{eq:border}
%and an alleged feasible interim allocation
%rule $\iallocs$, one just computes and compares the two sides of the
%inequality.

The characterization in Theorem~\ref{t:border} and the extensions
in~\cite{A+12,CDW12,CKM13} have additional features
not required or implied by Definition~\ref{d:gbt}, such as
polynomial-time separation oracles (and even compact
reformulations in the single-item
case~\cite{A+12,CDW12}).\footnote{Separation can be hard even 
  when recognition and testing are easy --- see e.g.~\cite{GLS} for
  some examples in combinatorial optimization.}  
All of our impossibility
results rule out analogs of Border's theorem that merely satisfy
Definition~\ref{d:gbt}, let alone these stronger properties.

A \gbt does imply that the problem of
testing the feasibility of an interim allocation rule is 
in $\coNP$.  
To prove that such a rule for an environment
$\env$ is not feasible, one simply exhibits an inequality of
$\L(\env)$ that the rule fails to satisfy --- there is always such an
inequality by Definition~\ref{d:gbt}(i), and verifying this failure
reduces to the recognition and testing problems for $\L(\env)$, which
by Definition~\ref{d:gbt}(ii,iii) are polynomial-time solvable.

Formally, we define the \mem problem for a setting $\Setting$ as:
given as input a description 
of an environment $\env$ and an interim allocation rule $\iallocs$ for
it, decide whether or not $\iallocs$ is feasible.
\begin{proposition}\label{prop:conp}
If a \gbt holds for the mechanism design setting~$\Setting$, then
the \mem problem for $\Setting$ belongs to $\coNP$.
\end{proposition}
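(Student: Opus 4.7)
The plan is to show \mem is in \coNP by exhibiting, for every infeasible input $(\env,\iallocs)$, a polynomial-size certificate of infeasibility that can be checked in polynomial time. The natural candidate is a single inequality $\alpha \in \L(\env)$ that $\iallocs$ violates; the three clauses of Definition~\ref{d:gbt} are designed precisely so that such an $\alpha$ exists, has small description, and can be verified efficiently.

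First I would argue existence of the certificate. Suppose $\iallocs$ is infeasible for $\env$. By the Characterization clause, the feasible interim allocation rules coincide with the solutions of $\L(\env)$, so there must be at least one $\alpha \in \L(\env)$ with $\iallocs$ violating $\alpha$; take any such $\alpha$ as the witness.

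Next I would bound the certificate size. By the Efficient Testing clause, every inequality of $\L(\env)$ has natural encoding length at most $p(\ell)$ for a fixed polynomial $p$, where $\ell$ is the description length of $\env$. Since $\ell$ is itself polynomial in the input, the certificate $\alpha$ has size polynomial in the description length of $(\env,\iallocs)$.

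Finally I would specify the verifier. Given $(\env,\iallocs)$ and a purported $\alpha$, it (a) runs the polynomial-time algorithm from the Efficient Recognition clause to confirm $\alpha \in \L(\env)$, and (b) computes and compares the two sides of $\alpha$ evaluated at $\iallocs$ — polynomial-time by Efficient Testing — accepting iff $\iallocs$ violates $\alpha$. Both checks run in time polynomial in the input plus $|\alpha|$, which is polynomial overall. This places the complement of \mem in $\NP$, i.e.\ $\mem \in \coNP$. There is no real obstacle here: the proposition is essentially an unpacking of Definition~\ref{d:gbt}, and the only thing to be careful about is keeping the three clauses aligned so that existence (i), succinctness (iii), and verifiability (ii plus iii) of the certificate are all invoked correctly.
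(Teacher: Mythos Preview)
Your proof is correct and is essentially the same as the paper's: the paper likewise takes a violated inequality of $\L(\env)$ as the certificate of infeasibility, invoking clause~(i) for its existence and clauses~(ii),~(iii) for polynomial-time verifiability. Your write-up just makes the size bound from clause~(iii) explicit, which is fine.
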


\subsection{Impossibility Results from Computational Intractability}

We now forge a connection between the existence of \gbts and the
computational complexity of natural optimization problems.
For a setting $\Setting$, the \optrev($\Setting$) problem is: given a
description of an environment $\env \in \Setting$, compute the
expected revenue earned by the optimal BIC and IIR mechanism.
The main result of this section shows that a \gbt exists for a setting
only when it is relatively tractable to solve exactly
the \optrev problem.

\begin{theorem}\label{t:main}
If a mechanism design setting $\Setting$ admits a \gbt,
then the \optrev($\Setting$) problem belongs to
$\PNP$.\footnote{Recall that $\PNP$ denotes the problems that can
  be solved in polynomial time using an $\NP$ oracle (or equivalently,
  a $\coNP$ oracle).}
\end{theorem}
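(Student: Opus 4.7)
The plan is to formulate \optrev as a linear program whose separation problem sits in $\NP$, and then invoke the Grötschel-Lovász-Schrijver ellipsoid framework to get a $\PNP$ algorithm.

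First I would set up the LP. Take as variables the entries of the reduced form $\rf$ of a mechanism: the objective~\eqref{eq:rev} is linear in $\iprices$, the BIC conditions~\eqref{eq:bic} (or~\eqref{eq:bic2}) and IIR conditions~\eqref{eq:iir} together give polynomially many linear constraints, and the remaining requirement --- that $\iallocs$ be a feasible interim allocation rule --- is by assumption encoded by the inequalities of the \lis $\L(\env)$. The LP has polynomially many variables but possibly exponentially many constraints; by Definition~\ref{d:gbt}(iii) each individual constraint has polynomial bit-length, so standard LP theory ensures that the optimum is attained at a vertex of polynomial bit-complexity and that this optimum equals the \optrev value.

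Next I would argue that separation for this LP lies in $\NP$. The BIC, IIR, and non-negativity constraints are polynomially many and can be checked directly. For the constraints coming from $\L(\env)$, a violated inequality serves as a polynomial-size witness: by Definition~\ref{d:gbt}(iii) the inequality has polynomial encoding and can be evaluated at $\iallocs$ in polynomial time, and by Definition~\ref{d:gbt}(ii) its membership in $\L(\env)$ can be verified in polynomial time. Thus the language ``$(\env,\rf)$ such that some LP constraint is violated'' is in $\NP$.

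Finally I would combine these with the ellipsoid method. A polynomial-time separation oracle would let one optimize the LP in polynomial time; replacing the separation oracle with an $\NP$ oracle, and converting the decision version into the search version by standard self-reducibility --- determining the coefficients of a violated inequality bit by bit via polynomially many queries of the form ``does there exist a violated inequality of $\L(\env)$ with the prescribed prefix?'' --- yields a polynomial-time algorithm with polynomially many $\NP$ queries that computes the LP optimum, i.e., \optrev($\env$). The step that warrants the most care is this search-to-decision reduction, since Grötschel-Lovász-Schrijver truly requires an explicit separating hyperplane per iteration, not merely a yes/no verdict; but the polynomial description-length bound in Definition~\ref{d:gbt}(iii) is precisely what makes this reduction terminate in polynomially many $\NP$ queries, and is the reason the weak ``niceness'' conditions of Definition~\ref{d:gbt} already suffice to force \optrev($\Setting$) into $\PNP$.
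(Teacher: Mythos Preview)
Your proposal is correct and reaches the same conclusion, but it takes a slightly different route than the paper. Both arguments formulate \optrev\ as an LP over reduced forms $\rf$ with the polynomially many BIC/IIR constraints plus the system $\L(\env)$, and both then appeal to the ellipsoid method. The difference is in how the feasibility subproblem is handled. You stay with the familiar separation-oracle formulation and manufacture a separation oracle in $\PNP$ by self-reducibility: first ask the $\NP$ oracle whether some constraint of $\L(\env)$ is violated, and if so, reconstruct one bit by bit using the polynomial encoding-length bound of Definition~\ref{d:gbt}(iii) together with the recognition condition of Definition~\ref{d:gbt}(ii). The paper instead sidesteps this reconstruction entirely by invoking the variant of the ellipsoid method that requires only a \emph{membership} oracle (as in~\cite[p.~189]{S86}), plugging in the $\coNP$ membership oracle of Proposition~\ref{prop:conp} directly; this version needs a known feasible point and a size bound on the defining inequalities, which the paper supplies via the constant (always-same-outcome, zero-payment) mechanism and Definition~\ref{d:gbt}(iii), respectively. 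Your argument is a little more hands-on but makes very explicit why each of the ``niceness'' conditions in Definition~\ref{d:gbt} is doing work; the paper's is shorter once one accepts the membership-oracle ellipsoid as a black box. Either way one lands in $\PNP$.
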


We later apply Theorem~\ref{t:main} in the form of
the following corollary.

\begin{corollary}\label{cor:main}
If the \optrev($\Setting$) problem is $\sharpP$-hard,
then there is no \gbt for $\Setting$ (unless the
polynomial hierarchy collapses).
\end{corollary}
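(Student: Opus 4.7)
The plan is to derive the corollary directly from Theorem~\ref{t:main} by a short contrapositive argument, combined with Toda's theorem to translate ``$\sharpP \sse \PNP$'' into the stated collapse of the polynomial hierarchy.

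First I would assume, toward a contradiction, that $\Setting$ does admit a \gbt in the sense of Definition~\ref{d:gbt}. Applying Theorem~\ref{t:main} directly then gives $\optrev(\Setting) \in \PNP$. Next I would invoke the hypothesis that $\optrev(\Setting)$ is $\sharpP$-hard: by definition (and since the paper explicitly works with polynomial-time Turing reductions), this means every problem in $\sharpP$ is solvable in polynomial time given an oracle for $\optrev(\Setting)$. Because $\PNP$ is closed under polynomial-time Turing reductions (an oracle call to a $\PNP$ machine can be simulated by the outer $\PNP$ machine, using its own $\NP$ oracle), substituting the $\optrev$ oracle by a $\PNP$ computation yields $\sharpP \sse \PNP$.

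The final step is to apply Toda's theorem, which states $\mathsf{PH} \sse \Ptime^{\sharpP}$. Combining with $\sharpP \sse \PNP$ gives $\mathsf{PH} \sse \Ptime^{\PNP} = \PNP$, so the polynomial hierarchy collapses to $\PNP$, as claimed in the footnote of Theorem~\ref{t:main}. This contradicts the assumed non-collapse of $\mathsf{PH}$, completing the proof.

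There is essentially no obstacle here: the work is entirely in Theorem~\ref{t:main}, and the corollary is a one-line chaining of that theorem, the definition of $\sharpP$-hardness under Turing reductions, closure of $\PNP$ under such reductions, and Toda's theorem. The only thing to be mildly careful about is the choice of reduction: since the hardness is with respect to polynomial-time Turing reductions (as stated in the paper's footnote), the argument goes through cleanly; if one only had many-one reductions available, the same conclusion would still follow, but the Turing-reduction formulation makes the containment $\sharpP \sse \PNP$ immediate.
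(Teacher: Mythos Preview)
Your proposal is correct and matches the paper's intended argument exactly: the paper does not write out a separate proof of Corollary~\ref{cor:main}, treating it as immediate from Theorem~\ref{t:main} together with the footnoted observation that $\sharpP \sse \PNP$ implies (via Toda's theorem) a collapse of the polynomial hierarchy to $\PNP$. Your care about Turing reductions and the closure of $\PNP$ under them is appropriate and consistent with the paper's conventions.
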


In the next section, we prove that the \optrev problem is $\sharpP$-hard
for many simple settings, ruling out the possibility of \gbts for them
(conditioned on $\sharpP \not\sse \PNP$).

\begin{remark}
By the same reasoning and under the same complexity assumption,
$\sharpP$-hardness of the \optrev($\Setting$) problem rules out any $\mathsf{PH}$
algorithm that recognizes the set of interim allocation rules for the
setting $\Setting$, not just via a \gbt. Modulo the same
assumptions, it also rules out other approaches to efficient revenue
optimization, say via an extended formulation of polynomial size.  

\end{remark}

\vspace{.1in}
\noindent
\begin{prevproof}{Theorem}{t:main}
Consider a setting $\Setting$ in which a \gbt holds and an instance
of the \optrev($\Setting$) problem --- a description of an environment
$\env \in \Setting$.  We compute the optimal expected revenue of a BIC
and IIR mechanism via linear programming, as follows.

The decision variables of our linear program correspond to the
components of an interim allocation rule $\iallocs$ and payment rule
$\iprices$.  The number of variables is polynomial in $n$ and $\max_i
|\Vali|$ and hence in the description of $\env$.  The (linear)
objective function is to maximize the expected seller revenue, as
in~\eqref{eq:rev}.  The BIC and IIR constraints can be expressed as
a polynomial number of linear inequalities as in~\eqref{eq:bic}
and~\eqref{eq:iir}, respectively.
By assumption, the 
interim feasibility constraint
can be expressed by a \lis 
$\L(\env)$ that satisfies the properties of Definition~\ref{d:gbt}.
%, where by assumption and Proposition~\ref{prop:conp} the \mem
%problem can be solved in polynomial time.  
Thus the optimal objective
function value of the linear program~(LP) that maximizes~\eqref{eq:rev}
subject to~\eqref{eq:bic}, \eqref{eq:iir}, and $\L(\env)$
is the solution to the given instance of \optrev($\Setting$).

To solve~(LP), we turn to the ellipsoid method~\cite{K79}, which
reduces the solution of a linear program to a polynomial number of
instances of a simpler problem (plus polynomial additional
computation).  The relevant simpler problem for us 
is a {\em membership oracle:} given an alleged reduced form $\rf$,
decide whether or not $\rf$ is feasible for~(LP).  Using
Proposition~\ref{prop:conp} and the fact that there are only
polynomially many constraints of the form~\eqref{eq:bic}
and~\eqref{eq:iir}, we have a $\coNP$ membership oracle.
(The most common way to apply the ellipsoid method, on the other hand, 
involves a {\em separation oracle}: given an alleged reduced form $\rf$,
either verify that $\rf$ is feasible for~(LP) or, if not,
produce a constraint of~(LP) that it violates.  Our assumption of a 
\gbt for $\Setting$ does not include also a separation oracle of
the same complexity hence we use the version of Ellipsoid that is based on a membership oracle.)

For optimization over polytopes described by linear inequalities of
bounded size, assuming that one knows a priori a 
feasible point $(\iallocs_0,\iprices_0)$,
the ellipsoid method can also be used to 
reduce the solution of a linear program to a polynomial number of
invocations of a membership oracle (see~\cite[P.189]{S86}).
%~\cite[Theorems 4.3.2,6.3.2,6.4.9]{GLS}.
% and radii 
%$r,R$ such that: (i) the $\ell_{\infty}$ ball of radius $r$ around
%$(\iallocs_0,\iprices_0)$ is 
%contained in the feasible region (in particular, the feasible region
%is full-dimensional); (ii) the $\ell_{\infty}$ ball of radius $R$
%around $(\iallocs_0,\iprices_0)$ 
%contains the feasible region; (iii) $r,R$ have polynomial description
%length.
%A technical argument, given in Appendix~\ref{app:main}, verifies this
%condition for~(LP).
The size bound on the defining linear inequalities is implied by
the Efficient Testing condition in Definition \ref{d:gbt}.
Computing a feasible point is trivial in our context: we can just consider
a mechanism that outputs some constant outcome irrespectively of players' types
(with payments that are always zero) and the induced constant interim allocation rule. 

We conclude that the linear program~(LP) and hence the \optrev problem
can be solved using a
polynomial number of
invocations of a membership oracle and polynomial
additional computation.  Since the membership problem for~(LP) belongs
to $\coNP$, the \optrev problem belongs to $\PNP$.
\end{prevproof}

%3.2 Ruling out Border's-type theorems
%-proof that Border's Thm=>exact OPT revenue computation in \PNP
% -Border's Thm=>membership of interim polytope in coNP
% -GLS/ellipsoid=>optimization over interim polytope in \PNP
%  -assumes that an \eps-ball around around the vector (\eps,...,\eps)
%   is in the polytope, true for the problems we care about
% -GLS/ellipsoid=>separation over the interim polytope in \PNP
% -add BIC,IIR constraints (only poly many), payment objective fn, 
%  use ellipsoid to optimize (in P^NP)
%-key corollary: under standard assumptions, #P-hardness of computing
% OPT => no Border's
%-remark: can rule out under even weaker assumptions (verification
% in PH)

What we have actually shown is a general Turing reduction from the 
\optrev($\Setting$) problem to the \mem problem for $\Setting$.

\begin{corollary}\label{cor:mem-rev}
If the \optrev($\Setting$) problem is $\sharpP$-hard,
then so is the \mem problem for $\Setting$.
\end{corollary}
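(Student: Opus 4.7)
The plan is to observe that the proof of Theorem~\ref{t:main} already contains the desired Turing reduction in disguise; the \gbt hypothesis was used there only to exhibit a $\coNP$ membership oracle for the LP's feasibility region. To prove the corollary, I replace that $\coNP$ oracle with a direct call to the \mem oracle for $\Setting$, yielding a polynomial-time Turing reduction from \optrev($\Setting$) to \mem that carries over $\sharpP$-hardness.

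Concretely, first I would re-assemble the same linear program (LP) as in the proof of Theorem~\ref{t:main}: decision variables consisting of an interim allocation rule $\iallocs$ and interim payment rule $\iprices$, the objective~\eqref{eq:rev}, and the polynomially many BIC constraints~\eqref{eq:bic} and IIR constraints~\eqref{eq:iir}, together with the (implicit) requirement that $\iallocs$ be feasible. The number of variables and the number of BIC/IIR inequalities are polynomial in the description length of $\env$.

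Second, I would implement a membership oracle for the feasible region of (LP) by checking the BIC and IIR constraints directly in polynomial time and querying the \mem oracle once on $\iallocs$ to test interim feasibility. This gives polynomial overhead per membership query. With this in hand, I apply the membership-oracle version of the ellipsoid method (see \cite[P.189]{S86}) to solve (LP) using polynomially many queries and polynomial additional computation. The initial feasible point is the trivial constant-outcome, zero-payment mechanism, exactly as in the proof of Theorem~\ref{t:main}.

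The main obstacle is the bit-complexity/full-dimensionality prerequisite of the membership-oracle ellipsoid method: one needs the polytope to have polynomially bounded facet encoding (or equivalently, vertices of polynomial bit-complexity) and to handle possible lower-dimensionality. Here this is not really a new issue — the interim feasibility polytope is the linear image of the $\zo$-polytope of ex post allocation rules, so its vertices have polynomial bit-complexity, and the BIC/IIR coefficients are part of the input; the same treatment used inside the proof of Theorem~\ref{t:main} applies verbatim because only the source of the membership oracle has changed. Hence \optrev($\Setting$) Turing-reduces in polynomial time to \mem for $\Setting$, and the corollary follows immediately from $\sharpP$-hardness of \optrev($\Setting$).
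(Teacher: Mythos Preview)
Your proposal is correct and follows exactly the route the paper intends: the paper's entire ``proof'' of the corollary is the single sentence preceding it, namely that the argument of Theorem~\ref{t:main} already constitutes a Turing reduction from \optrev($\Setting$) to \mem, with the \gbt-supplied $\coNP$ oracle replaced by a direct \mem oracle. You have simply spelled this out explicitly.

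One minor remark: you go a bit further than the paper by addressing the facet/vertex bit-complexity prerequisite for the membership-oracle ellipsoid method. In the proof of Theorem~\ref{t:main} that bound was obtained from the Efficient Testing condition of the \gbt, which is unavailable here; your observation that the interim polytope is a linear image (with input-sized rational coefficients) of a $\{0,1\}$-polytope, hence has vertices of polynomial encoding length, is the correct substitute and fills a detail the paper leaves implicit.
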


More generally, the proof of Theorem~\ref{t:main} shows that a \gbt
allows an arbitrary linear function of the interim allocation and
payment rules to be optimized over the space of BIC and IIR mechanisms
in $\PNP$.  
For example, let \optwel($\Setting$) be the problem of, given an
environment $\env \in \Setting$, computing the maximum expected
welfare achieved by a BIC and IIR mechanism.\footnote{The
  welfare-maximizing mechanism is of course the VCG mechanism
  (e.g.~\cite{N07}) --- but even knowing this, it is not generally
  trivial to compute its expected welfare.}
Since the expected welfare obtained by a mechanism can be written as
$\sum_{i=1}^n \sum_{\vali \in \Vali} \dens_i(\vali)\vali\ialloc_i(\vali)$
for a single-parameter environment or as
$\sum_{i=1}^n \sum_{\vvali \in \Vali} \dens_i(\vvali)
\sum_{j \in M} \val_{ij}\ialloc_{ij}(\vvali)$ for a multi-item
environment, we have the following corollary.

\begin{corollary}\label{cor:main2}
If the \optwel($\Setting$) problem is $\sharpP$-hard,
then there is no \gbt for $\Setting$ (unless the
polynomial hierarchy collapses).
\end{corollary}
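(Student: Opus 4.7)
The plan is to follow the template of the proof of Theorem~\ref{t:main} essentially verbatim, swapping the revenue objective for the welfare objective, and then to combine the resulting complexity upper bound with Toda's theorem. Concretely, I would assume a \gbt exists for $\Setting$ and then argue that \optwel($\Setting$)$\in\PNP$; together with $\sharpP$-hardness this forces $\sharpP\subseteq\PNP$, which by Toda's theorem collapses the polynomial hierarchy to $\PNP$.

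The first step is to set up the analogue of the linear program~(LP) from the proof of Theorem~\ref{t:main}, with decision variables corresponding to an interim allocation rule $\iallocs$ and an interim payment rule $\iprices$, the BIC constraints~\eqref{eq:bic} (or~\eqref{eq:bic2} for the multi-item case), the IIR constraints~\eqref{eq:iir}, and the \lis $\L(\env)$ guaranteed by the \gbt. The only change is the objective function: instead of maximizing the expected revenue~\eqref{eq:rev}, I would maximize
\[
\sum_{i=1}^n \sum_{\vali \in \Vali} \densi(\vali)\,\vali\,\ialloc_i(\vali)
\]
in the single-parameter case, and the analogous $\sum_{i=1}^n \sum_{\vvali \in \Vali}\densi(\vvali)\sum_{j\in M}\val_{ij}\,\ialloc_{ij}(\vvali)$ in the multi-item case. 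Both expressions are linear in the reduced-form variables, which is all that the ellipsoid-based argument requires.

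The second step is to reuse the ellipsoid machinery from the proof of Theorem~\ref{t:main} without modification. By Proposition~\ref{prop:conp}, feasibility in $\L(\env)$ admits a $\coNP$ membership oracle; combined with the polynomially many BIC and IIR constraints, (LP$'$) itself admits a $\coNP$ membership oracle. The Efficient Testing condition of Definition~\ref{d:gbt} supplies the size bound on the defining inequalities. For the a~priori feasible starting point needed by the membership-oracle version of the ellipsoid method (\cite[p.189]{S86}), I would take the trivial mechanism that always chooses the same outcome and charges nothing, whose reduced form is obviously feasible. This yields an algorithm that solves~(LP$'$) with a polynomial number of queries to a $\coNP$ oracle, placing \optwel($\Setting$) in $\PNP$.

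The final step invokes the hypothesis: \optwel($\Setting$) being $\sharpP$-hard and in $\PNP$ gives $\sharpP\subseteq\PNP$, and Toda's theorem then forces the polynomial hierarchy to collapse into $\PNP$. There is no real new obstacle beyond Theorem~\ref{t:main}; the only thing to verify is that expected welfare is indeed a linear function of $\iallocs$ (it does not involve $\iprices$ at all), which is immediate. In fact the argument gives a general principle: any linear objective in the reduced-form variables whose coefficients are polynomial-time computable from $\env$ can be optimized in $\PNP$ whenever a \gbt exists, and Corollary~\ref{cor:main2} is just the instantiation of this principle with the welfare objective.
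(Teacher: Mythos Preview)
Your proposal is correct and follows essentially the same approach as the paper: the paper explicitly remarks that the proof of Theorem~\ref{t:main} goes through for any linear objective in the reduced-form variables, observes that expected welfare is such a linear function, and then states Corollary~\ref{cor:main2} as the immediate consequence. Your write-up simply spells out this argument in full detail (including the appeal to Toda's theorem for the collapse), so there is nothing to correct or add.
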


\section{Complexity of Computing the Optimal Expected
  Revenue and Welfare}\label{s:hard}

This section shows that,
in several simple settings, the \optrev or the \optwel problem
is $\sharpP$-hard.  
Together with Corollaries~\ref{cor:main} and~\ref{cor:main2}, these
results effectively rule out, conditioned on $\sharpP
\not\sse \PNP$, significant generalizations
of Border's theorem beyond those that are already known.

%4. Complexity of Exact Revenue Computation.

%-take-away: #P-hard except is very special cases of interest.
% will make the case by proving #P-hardness for several cases
% "just beyond" where positive results currently known

\subsection{Preliminaries and Examples}

%4.1 Preliminaries
%-formal definition of the problem
%-the issue: exponential sums/expectations (even when mechanism given
% by simple formula and tractable to implement)
%-obviously at least as hard as ex post welfare maximization
% (consider a point mass distribution) --- for hardness results, most
% interesting to consider the case where the latter problem is in P
% (like matroids, matching)
%Example: in P for single-item, k-unit, etc.
% (also have border's-type theorems for these settings)

Recall the definition of the \optrev problem for a setting $\Setting$:
given a description of an environment $\env \in \Setting$, compute the
maximum expected revenue obtained by any BIC and IIR mechanism.  Even
if one is handed the optimal mechanism on a silver platter, and this
mechanism runs in polynomial time for every valuation profile,
naive computation of its expected revenue requires running over the
exponentially many valuation profiles.  
The question is whether or not there are more efficient methods for
computing this expected value.  

To develop context and intuition for the problem,
we review the argument that
% the reader can also
%skip to Section~\ref{ss:pp} without loss of continuity.
%For example, 
the \optrev problem for single-item auctions can be
solved in polynomial time
(see also~\cite{A+12,CDW12}).\footnote{This is not surprising in light of
Theorems~\ref{t:border} and~\ref{t:main}!}
%One proof of this uses the
%fact that the revenue-maximizing auction always awards the item to the
%bidder $i$ with the highest ``virtual value,'' and that its expected
%revenue equals the expected virtual value of the winner~\cite{M81}.
%This reduces to \optrev problem 
%Here is a sketch of an
%algorithm 
\begin{enumerate}

\item For each bidder~$i$ and possible valuation $\vali \in \Vali$,
compute the corresponding {\em (ironed) virtual valuation} $\vvi(\vali)$, as in
Myerson~\cite{M81}.\footnote{There is an analogous simple formula for
  the case of a discrete set of bidder valuations~\cite{E07}.}  This
can be done straightforwardly in time polynomial in the size of
$\env$.

\item By~\cite{E07,M81}, the (ex post) allocation rule $\allocs^*$
of the optimal mechanism awards the
  good to the bidder with the highest positive virtual valuation
  (breaking ties lexicographically, say), if any, and to no one
  otherwise.  The second step of the algorithm is to compute the
  interim allocation rule $\iallocs^*$ induced by $\allocs^*$.  This
  can be done in polynomial time via a simple
  computation.\footnote{For a bidder $i$ and valuation $\vali$,
    $\ialloci^*(\vali)$ is 0 if $\vvi(\vali) \le 0$, and otherwise is
    $\prod_{j \neq i} \prob{\vv_j(\val_j) < \vvi(\vali)}$.}

\item By~\cite{E07,M81}, the solution to the \optrev problem equals
  $\sum_{i=1}^n \expect[\vali \sim
  \disti]{\vvi(\vali)\ialloci^*(\vali)}$; given the virtual valuations
  and the interim allocation rule, this quantity is trivial to
  compute in polynomial time.

\end{enumerate}
%More generally, 
%XXX TALK ABOUT KNOWN EXTENSIONS (SEE ALSO SEC 1.5)

%in all of the environments that 
%the characterization of expected revenue by
%(polynomial-time computable) virtual valuations used above holds for
%all single-parameter problems~\cite{E07,M81}.  For such problems, the
%algorithm above gives a generic reduction from the \optrev problem to
%the problem of computing the interim allocation rule $\iallocs^*$ of
%an optimal mechanism.\footnote{Even in certain
%  multi-parameter problems, the \optrev problem 
%polynomial-time reduces to the problem of, given 
%virtual valuations, compute the interim allocation rule induced by the
%ex post rule that always picks a virtual welfare-maximizing
%outcome.  See~\cite{A+12,CDW12} for details.}

%For example, for $k$-unit auctions with
%unit-demand bidders, the second step can be implemented using
%straightforward dynamic programming.  This is again corroborated by
%the existence of a \gbt for multi-unit settings~\cite{A+12,CDW12}.

%Having reviewed the known solvable special cases of the \optrev
%problem, we proceed to proving that it is $\sharpP$-hard in many other
%simple cases of interest.

\subsection{Public Projects}\label{ss:pp}

%4.3 #P-hard for public project
%(building on Parikshit's writeup)
%-hardness not an artifact of non-downward closed, see next examples
%-also interesting implications for Boolean functions, see Section 5

Recall that in a public project environment, there are only two
outcomes: choose all players (``build the bridge'') or no player
(``not'').  We now show that the \optrev problem is hard in such
environments, even in the extremely simple case when each
player is equally likely to have a zero valuation or a known positive
valuation for the ``yes'' outcome.\footnote{Note that the \optwel problem
  is trivial to solve in the public project setting.}

\begin{theorem}\label{t:pp}
The \optrev problem is $\sharpP$-hard for the public project setting, even
when every player has only two possible valuations, and the valuation
distribution is uniform.
\end{theorem}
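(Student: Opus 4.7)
The plan is to give a polynomial-time Turing reduction from $\sPart$ to $\optrev$ for Boolean public projects, establishing $\sharpP$-hardness. The argument proceeds in two steps.

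First, derive a closed form for the optimal expected revenue. Because each player $i$ has only two types $0$ and $a_i$ with probability $1/2$ each, Myerson's ironed virtual values are $\vvi(0)=-a_i$ and $\vvi(a_i)=+a_i$ (no ironing is needed with two types). The optimal BIC and IIR mechanism is the virtual-welfare maximizer, which outputs ``yes'' iff $\sum_i \vvi(\vali) > 0$. Encoding the realized type of player $i$ by $y_i\in\pmo$ with $y_i=+1$ for the high type, the virtual welfare equals $\sum_i a_i y_i$, and Myerson's payment identity yields expected revenue $\frac{1}{2^n}\sum_y \max(0,\sum_i a_i y_i)$. Pairing each $y$ with $-y$ and using $\max(0,s)+\max(0,-s)=|s|$ collapses this to
\[
\optrev(a_1,\dots,a_n)\;=\;\frac{1}{2^{n+1}}\sum_{y\in\pmo^n}\bigg|\sum_i a_i y_i\bigg|.
\]

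Second, perform the reduction. Given positive integers $(a_1,\dots,a_n)$, I would query the oracle on the $n$-player instance with values $(a_1,\dots,a_n)$ (call the answer $R_1$) and on the $(n{+}1)$-player instance with one additional player of high value $1$ (call the answer $R_2$). Writing $k_y:=\sum_{i\le n} a_i y_i$ and splitting on the extra player's sign, the elementary identity $|x+1|+|x-1|=2\max(|x|,1)$ gives
\[
R_1=\frac{1}{2^{n+1}}\sum_y |k_y|,\qquad R_2=\frac{1}{2^{n+1}}\sum_y \max(|k_y|,1).
\]
Since each $k_y$ is an integer, $\max(|k_y|,1)-|k_y|$ equals $1$ when $k_y=0$ and $0$ otherwise, so
\[
2^{n+1}(R_2-R_1)\;=\;\big|\{y\in\pmo^n : \textstyle\sum_i a_i y_i = 0\}\big|\;=\;\sPart(a_1,\dots,a_n)
\]
under the standard bijection between subsets of $[n]$ and sign vectors. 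The reduction thus uses two oracle calls and $O(1)$ arithmetic operations.

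The main (minor) obstacle I would anticipate is ensuring that the ``integer threshold at $1$'' cleanly isolates the event $k_y=0$ from all other values of $|k_y|$. This works because the $a_i$ are integers and the auxiliary high value is $1$: the gap between $0$ and the next attainable $|k_y|$ is exactly $1$, so a single differenced pair of oracle queries suffices; no threshold sweep or higher-order differences are needed. A secondary point to check is that ties in the virtual welfare ($\sum_i a_i y_i = 0$) do not matter for revenue, which is immediate from the symmetrized formula since $\max(0,0)=0$.
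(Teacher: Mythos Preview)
Your proposal is correct and follows essentially the same approach as the paper: the paper factors the argument through the Khintchine constant, proving $\optrev=K(a)/2$ via Myerson (its Lemma~\ref{lem:key}) and then reducing $\sPart$ to computing $K$ by comparing $K(2w_1,\ldots,2w_n,0)$ with $K(2w_1,\ldots,2w_n,1)$ using the same identity $|x+1|+|x-1|=2\max(|x|,1)$. Your version is a slightly more direct packaging of the identical idea (no doubling and no separate Khintchine lemma), but the substance is the same.
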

This result can be usefully re-interpreted in the
context of Boolean function analysis; see Section~\ref{s:bool} for
details.

Theorems~\ref{t:main} and~\ref{t:pp} immediately imply the following.
\begin{corollary}\label{cor:pp}
Unless the polynomial hierarchy collapses, there is no \gbt for the
public project setting.
\end{corollary}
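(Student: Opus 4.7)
The plan is to derive Corollary~\ref{cor:pp} as an immediate consequence of Theorem~\ref{t:pp} together with Corollary~\ref{cor:main}, which in turn is a reformulation of Theorem~\ref{t:main}. All the substantive work has already been done; what remains is to chain the implications and invoke Toda's theorem to phrase the conclusion in terms of the polynomial hierarchy.

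Concretely, I would first suppose, toward contradiction, that a \gbt does hold for the public project setting~$\Setting_{\mathrm{pp}}$. By Theorem~\ref{t:main}, such a \gbt implies that the \optrev($\Setting_{\mathrm{pp}}$) problem lies in $\PNP$: the ellipsoid argument there reduces the revenue-maximizing LP to polynomially many queries to a $\coNP$ membership oracle, using the efficient recognition and testing clauses of Definition~\ref{d:gbt} to handle the exponentially many interim-feasibility constraints, and using the trivial feasible point given by the always-``no'' mechanism. Next I would apply Theorem~\ref{t:pp}, which asserts that \optrev($\Setting_{\mathrm{pp}}$) is $\sharpP$-hard even in the restricted uniform two-valuation case. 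Combining these gives $\sharpP \subseteq \PNP$. By Toda's theorem (noted in the footnote accompanying Theorem~\ref{t:main}), this forces the polynomial hierarchy to collapse to $\PNP$, completing the contrapositive.

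There is no real obstacle in this derivation beyond recording the contrapositive of Corollary~\ref{cor:main} and specializing $\Setting$ to the public project setting; the difficulty is entirely concentrated in Theorem~\ref{t:pp}, whose $\sharpP$-hardness reduction is where the combinatorial content lives. Accordingly, I would keep the written proof to a single sentence noting that Corollary~\ref{cor:pp} is an immediate combination of Theorem~\ref{t:pp} and Corollary~\ref{cor:main}, and refer the reader to those statements for the underlying arguments.
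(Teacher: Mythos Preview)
Your proposal is correct and matches the paper's own treatment: the paper simply states that Theorems~\ref{t:main} and~\ref{t:pp} immediately imply Corollary~\ref{cor:pp}, which is exactly the chain of implications you describe (your use of Corollary~\ref{cor:main} in place of Theorem~\ref{t:main} is equivalent, since the former is just the contrapositive restatement of the latter).
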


We begin by reformulating the \optrev problem, in the special case of an
environment $\env$ in which each bidder~$i$ is equally likely to have
the valuation~0 or the valuation~$a_i$.  We show it is equivalent to
the $\sharpP$-complete problem of computing the Khintchine constant of a
vector, which is defined as follows (cf., \cite{DDS}). 

\begin{definition}
For a vector $a \in \R^n$, define 
\[ K(a) = \expect[\xu]{|x \cdot a|} \]
to be the {\em Khintchine constant} for $a$.
\end{definition}

It is known that 
\[ \frac{\|a\|_2}{\sqrt{2}} \leq K(a) \leq \|a\|_2,\]
where the upper bound follows from Cauchy-Schwarz
and the lower bound is the
classical Khintchine inequality.  We use the following
$\sharpP$-hardness result, which we prove in the appendix for
completeness. 

\begin{lemma}\label{l:wb}
Given a vector $a \in \R^n$, the problem of computing $K(a)$ is
$\sharpP$-complete, even when $a \in
\mathbb{Z}^n$ with bit-length polynomial in $n$.
\end{lemma}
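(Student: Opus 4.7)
The plan is to establish both $\sharpP$ membership and $\sharpP$-hardness for the integer quantity $2^n K(a) = \sum_{x \in \pmo^n}|x \cdot a|$, which is Turing-equivalent to computing the rational $K(a)$. For membership, consider the nondeterministic polynomial-time machine that on input $a \in \mathbb{Z}^n$ guesses a pair $(x, k)$ with $x \in \pmo^n$ and an integer $k$ satisfying $0 \le k < |x \cdot a|$, accepting iff the guess is valid. Since $|x \cdot a| \le \sum_i |a_i|$, the bit length of $k$ is polynomial in the input, and the number of accepting branches equals $\sum_x |x \cdot a|$.

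For hardness, I would reduce from $\sPart$: given $a_1, \ldots, a_n \in \mathbb{Z}_+$, count $N(a, 0) := |\{x \in \pmo^n : x \cdot a = 0\}|$, which up to a factor of two is the number of equal-sum bipartitions of $a$ and is classically $\sharpP$-complete. The core observation is an identity relating $N(a, 0)$ to the Khintchine constants of the two $(n{+}1)$-dimensional vectors $a^{(0)} = (a_1, \ldots, a_n, 0)$ and $a^{(1)} = (a_1, \ldots, a_n, 1)$. Conditioning on the last coordinate of $x' \in \pmo^{n+1}$ gives
\[
2^{n+1} K(a^{(1)}) \;=\; \sum_{x \in \pmo^n} \bigl(|x \cdot a + 1| + |x \cdot a - 1|\bigr).
\]
Because the $a_i$ are integers, the summand equals $2$ when $x \cdot a = 0$ and equals $2|x \cdot a|$ otherwise. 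Summing, and using $K(a^{(0)}) = K(a)$ (the extra coordinate is $0$), yields
\[
2^{n+1} K(a^{(1)}) \;=\; 2\,N(a, 0) + 2^{n+1} K(a^{(0)}), \qquad \text{i.e.,} \qquad N(a, 0) \;=\; 2^n\bigl(K(a^{(1)}) - K(a^{(0)})\bigr).
\]
The Turing reduction is then immediate: two calls to the Khintchine oracle on integer vectors of bit length at most one more than $a$ suffice, and the right-hand side recovers the $\sPart$ count.

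The main obstacle is really just verifying the above identity, which crucially depends on the integrality of $a$: only then is $|x \cdot a| \ge 1$ whenever $x \cdot a \ne 0$, which is precisely what makes $|x \cdot a + 1| + |x \cdot a - 1|$ collapse to $2|x \cdot a|$ in the nonzero case. The remaining pieces — bit-length accounting, the $\sharpP$ membership check, and the classical $\sharpP$-completeness of $\sPart$ — are standard bookkeeping.
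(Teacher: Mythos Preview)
Your proof is correct and takes essentially the same approach as the paper: reduce from $\sPart$ by appending a coordinate of value~$0$ versus~$1$ and use the identity $|s+1|+|s-1|=2\max(|s|,1)$ to isolate the count of zero partial sums as $2^n\bigl(K(a^{(1)})-K(a^{(0)})\bigr)$. The only cosmetic difference is that the paper first doubles the weights (so that nonzero sums satisfy $|2w\cdot x|\ge 2$) whereas you rely directly on integrality to get $|x\cdot a|\ge 1$; your version also spells out the $\sharpP$ membership argument, which the paper leaves implicit.
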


\begin{lemma}
\label{lem:key}
The optimal revenue of a BIC and IIR mechanism for the public projects problem is $K(a)/2$.
\end{lemma}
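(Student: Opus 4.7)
The plan is to invoke Myerson's optimal auction theory for this simple two-type single-parameter environment, write the optimal revenue as an expectation of the positive part of a signed sum, and then unfold that expectation using the Khintchine functional.

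First, I would compute virtual values. For each player $i$, whose valuation is $0$ or $a_i$ with equal probability, the standard discrete virtual-value formula (as in \cite{M81,E07}) yields $\vv_i(a_i) = a_i$ and $\vv_i(0) = 0 - \tfrac{1/2}{1/2}(a_i - 0) = -a_i$. These are monotone in the type, so no ironing is needed. Re-encoding each player's type by a sign $x_i \in \pmo$, with $x_i = +1$ when $\val_i = a_i$ and $x_i = -1$ when $\val_i = 0$, we can write $\vv_i(\val_i) = a_i x_i$, and the induced distribution of $x$ is uniform on $\pmo^n$.

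Second, I would identify the optimal allocation rule. The public project setting has only two feasible outcomes: \emph{build} (all players served) with virtual welfare $\sum_i \vv_i(\val_i) = a \cdot x$, or \emph{don't build} with virtual welfare $0$. Myerson's theorem says the revenue-optimal BIC and IIR mechanism picks the outcome of larger virtual welfare, i.e.\ it builds exactly when $a \cdot x > 0$, and further that its expected revenue equals its expected virtual welfare:
\[
\expect[\xu]{(a \cdot x)^+}.
\]
Third, I would evaluate this expectation using the identity $(z)^+ = \tfrac{1}{2}(z + |z|)$ and the fact that $\expect[\xu]{a \cdot x} = 0$ (since each $x_i$ is mean zero). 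This gives
\[
\expect[\xu]{(a\cdot x)^+} = \tfrac{1}{2}\expect[\xu]{|a\cdot x|} = \frac{K(a)}{2},
\]
which is exactly the claimed value.

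The anticipated obstacle is bookkeeping rather than conceptual: one has to pin down the virtual-value formula for the two-point prior, verify its monotonicity so that ironing is unnecessary, and check that Myerson's revenue-equals-virtual-welfare identity is being applied to the binary public-project feasibility structure $\F = \{\emptyset, U\}$ rather than the more familiar single-item structure. Once the $\pmo$-encoding of types is in place, everything else is one line of algebra.
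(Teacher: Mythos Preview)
Your proposal is correct and follows essentially the same route as the paper's first proof: compute the two-point virtual values $\vv_i(a_i)=a_i$, $\vv_i(0)=-a_i$, encode types by signs $x_i\in\pmo$, and recognize the optimal expected revenue as $\expect[\xu]{(a\cdot x)^+}$. The only cosmetic difference is in the last step: the paper pairs $x$ with $-x$ to get $\expect{\max\{0,x\cdot a\}}=\tfrac12\expect{|x\cdot a|}$, while you use the identity $(z)^+=\tfrac12(z+|z|)$ together with $\expect{a\cdot x}=0$; these are the same symmetry observation in slightly different clothing.
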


Combining these two lemmas, the proof of Theorem \ref{t:pp} is
immediate. We present two proofs of Lemma \ref{lem:key}. Our first
proof invokes Myerson's characterization of optimal
auctions \cite{M81}. Our second proof is self-contained and  uses an
argument from the analysis of Boolean functions \cite{O'Donnell} and will be presented in section \ref{s:bool}.  

\vspace{.1in}

\begin{prevproof}{Lemma}{lem:key}
(First version.)
The standard virtual valuations for our setting
(see~\cite{M81,E07}) are $\vv_i(0) = -a_i$ and 
$\vv_i(a_i) = a_i$ for each bidder~$i$.  
In a binary single-parameter environment, the revenue-maximizing
auction always selects the outcome that maximizes the sum of the
virtual valuations of the chosen players --- the ``virtual welfare''
--- and the expected revenue of this auction equals its expected
virtual welfare~\cite{M81,E07}.  Translated to the current special
case, the solution to the \optrev problem is precisely
\begin{equation}\label{eq:pp}
\expect[\vsd]{\max \left\{ 0, \sum_{i=1}^n \vv_i(\vali) \right \}}
=
\expect[\xu]{\max \left\{ 0, x \cdot a\right\} },
\end{equation}
where $x \in \{ \pm 1 \}^n$ is chosen uniformly at random and $a =(a_1,\ldots,a_n)$.

Since $(- x) \cdot a = -(x \cdot a)$, we have
\[ \expect[\xu]{\max \left\{ 0, x \cdot a\right\} } =
\frac{1}{2}\expect[\xu]{\left|x \cdot a\right|}  = \frac{K(a)}{2},\]
completing the proof.
%Now apply Lemma \ref{l:wb}.
\end{prevproof}

\eat{

XXX RESTATE THE FOLLOWING PROOF WITHOUT AUCTION TERMS???
PARIK: Is the above OK?

\begin{proof}  
Containment in $\sharpP$ is trivial;  
we proceed with the reduction showing hardness.
$\Part$ is a well-known $\NP$-complete problem whose input consists of
$n$ positive
integers $w_1,\ldots,w_n$ and the goal is to split the numbers in
two parts so that their sum is equal. This is equivalent to asking if
there exists $x \in \pmo^n$ such that $x \cdot \w =0$ where $\w
=(w_1,\ldots,w_n)$. The counting version which we denote $\sPart$ is
complete for $\sharpP$.
We claim that the $\sPart$ problem reduces to the \optrev problem in
public project environments, proving that the latter is
$\sharpP$-hard.

Consider an instance $(w_1,\ldots,w_n)$ of $\sPart$,
and let $p$ denote the probability that $x \cdot \w = 0$ for a
random $x \in \pmo^n$.  Note that $p2^n$ is the answer to the
$\sPart$ instance.
We then create two instances of $\optrev$ of the special type described
above.  In both instances, there are $n+1$ players, and
$a_i = 2w_i$ for $i=1,2,\ldots,n$.  In the first instance, $a_{n+1} =
1$; in the second, $a_{n+1} = 0$.  

We next compute the difference between the optimal expected
virtual welfare (and hence revenue)~\eqref{eq:pp} in these two
environments.  
The sum of the virtual valuations of the first $n$ bidders is a
(possibly negative) multiple of~2, either non-zero (``case 1'') or
zero (``case 2'').
A key observation is that bidder $n+1$ influences which of the two
cases in~\eqref{eq:pp} is chosen --- equivalently, which of the two
outcomes an optimal mechanism chooses --- if and only if we are in case~2.
This implies that, in both cases, the contribution of the first $n$ bidders
to the expected optimal virtual welfare is the same in both environments.
In the first environment, the expected contribution of the last bidder
is either 0 (in case 1) or $\tfrac{1}{2}$ (in case 2).\footnote{In the
  latter case, the bidder contributes~1 whenever its virtual valuation is~1
  (since the second expression in the ``max'' of~\eqref{eq:pp} is
  bigger), and~0 whenever its virtual valuation is -1 (when the first
  expression is bigger).}  In the second environment, the last bidder
always contributes~0 to the virtual welfare.
Thus, the expected optimal revenue in the first environment exceeds
than in the second environment by precisely $\tfrac{p}{2}$.
We conclude that computing~\eqref{eq:pp} for these two environments
enables the solution of the given $\sPart$ instance, and the reduction
is complete.
\end{proof}
}

\subsection{Single-Minded Environments}\label{ss:sm}

This section presents an impossibility result for a downward-closed
setting (where, unlike in public project environments, every subset of
a feasible set is again feasible).  Recall that in a single-minded
environment, each bidder $i$  wants a publicly known subset $S_i$ of goods, and the feasible
outcomes are subsets of bidders with mutually disjoint bundles.

Assume that there are $t$ items in total, and $p$ players, so
that we can view each player's bundle as an edge in a graph with $t$
vertices. Denote the resulting graph by $H$. We allow parallel
edges (i.e., players that desire the same bundle). The
prior distribution is uniform over $\zo^p$. Given a
string $x \in \zo^p$, we define a subgraph $H_x$ of $H$ by keeping edge $i$ if
$x_i =1$. Having $x_i =0$  indicates that player $i$ has $0$
valuation, $x_i =1$ means her valuation is $1$ for her bundle. A
feasible allocation corresponds to a matching in $H_x$, and the
maximum welfare is the size of a maximum matching. Thus the \optwel
problem in this setting amounts to computing the expected size of the
maximum matching in a random edge subgraph of $H$.

\begin{theorem}\label{t:sm}
The \optwel problem is $\sharpP$-hard for the single-minded bidder
setting, even when every player's bundle contains two items,
every player has only two possible valuations, and the valuation
distribution is uniform.
\end{theorem}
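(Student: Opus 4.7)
The plan is to reduce from a $\sharpP$-hard matching-counting problem via polynomial interpolation through edge replication. Given a graph $G$ with $m$ edges, let $G^{(t)}$ be the multigraph obtained by replacing every edge with $t$ parallel copies. Parallel copies of an edge cannot simultaneously lie in any matching, so $\nu(G^{(t)}_x)=\nu(G_y)$, where $y_e \in \{0,1\}$ indicates whether at least one of the $t$ copies of $e$ is present in $x$. Each $y_e$ is i.i.d.\ Bernoulli with parameter $1-2^{-t}$, so a single \optwel query on the polynomial-size graph $G^{(t)}$ evaluates
\[
R_G(p) \;:=\; \E_{y \sim \mathrm{Bern}(p)^m}[\nu(G_y)] \;=\; \sum_{S \subseteq E(G)} \nu(G_S)\, p^{|S|}(1-p)^{m-|S|}
\]
at $p = 1 - 2^{-t}$. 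Varying $t$ from $1$ to $m+1$ gives $m+1$ distinct evaluation points, so $R_G$ (a polynomial of degree at most $m$) can be recovered by Lagrange interpolation.

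Having reconstructed $R_G$, I would extract a $\sharpP$-hard quantity from it. The substitution $z = p/(1-p)$ turns $(1+z)^m R_G(z/(1+z))$ into the generating polynomial $\sum_k A_k z^k$ with $A_k = \sum_{|S|=k}\nu(G_S)$, so every $A_k$ is polynomial-time computable from the oracle. Grouping $k$-edge subsets $S$ by their subgraph-isomorphism type $F$ yields
\[
A_k \;=\; k\cdot m_k(G) \;+\; \sum_{F:\, |E(F)|=k,\ \nu(F)<k} \nu(F)\cdot \#(G,F),
\]
where $m_k(G)$ denotes the number of $k$-matchings of $G$ and $\#(G,F)$ counts occurrences of $F$ in $G$. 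Proceeding inductively on $k$ and using polynomial-time-computable counts of ``simple'' subgraph types (stars, short paths, small forests), the hope is to isolate $m_k(G)$ from $A_k$. Taking $k = n$ for a bipartite $G$ with $|A|=|B|=n$ then gives $m_n(G) = PM(G)$, the number of perfect matchings of $G$, which is a classical $\sharpP$-hard quantity.

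The main obstacle is the extraction step: for $k$ close to $n$, the subgraph types $F$ with $|E(F)|=k$ and $\nu(F)<k$ proliferate combinatorially, and several of the counts $\#(G,F)$ are themselves $\sharpP$-hard --- for instance a near-matching $F$ with $\nu(F)=k-1$ already contains a $(k-2)$-matching, whose count in $G$ is precisely the hard quantity we are trying to extract. A cleaner alternative that sidesteps this obstacle is a direct gadget reduction: given a bipartite $G$ whose perfect-matching count is sought, build a polynomial-size single-minded 2-bundle instance $H$ (for example, by attaching small pendant ``amplifier'' gadgets to each vertex of $G$) designed so that $\E_x[\nu(H_x)]$ decomposes as a polynomial-time computable baseline plus a known multiple of $PM(G)$. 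Such a construction would yield the desired $\sharpP$-hardness directly, without needing to interpolate and disentangle $R_G$.
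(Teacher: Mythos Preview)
Your interpolation setup is correct and the identity $\nu(G^{(t)}_x)=\nu(G_y)$ with $y_e\sim\mathrm{Bern}(1-2^{-t})$ is a nice observation, so you genuinely do recover the polynomial $R_G(p)$ and hence all the $A_k$. But the proposal does not contain a proof: the extraction step is where the argument has to do work, and you yourself point out that for large $k$ the correction terms $\sum_{F}\nu(F)\,\#(G,F)$ involve subgraph counts that are themselves $\sharpP$-hard. That diagnosis is accurate, and nothing you write afterwards overcomes it. The ``cleaner alternative'' paragraph is only a wish (``build $H$ \ldots designed so that $\E[\nu(H_x)]$ decomposes as baseline plus a multiple of $PM(G)$''); no gadget is given, no decomposition is verified, and without those details there is no reduction.

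The paper's proof is exactly the kind of explicit gadget construction you gesture at, but it reduces from \stconn\ rather than from counting perfect matchings, and the gadget is quite different from ``pendant amplifiers''. Given a directed graph $G$ on $[n]\cup\{s,t\}$, the paper builds a bipartite graph $H$ with left side $\{s\}\cup[n]$ and right side $[n]\cup\{t\}$: each directed edge $(i,j)$ of $G$ becomes a single ``red'' edge, and each $i\in[n]$ gets $k$ parallel ``blue'' edges between its two copies. With probability $1-m/2^k$ at least one blue edge survives for every $i$, forcing $\nu(H')\ge n$; conditioned on this, $\nu(H')=n+1$ if and only if the surviving red edges contain an $s$--$t$ path (the path supplies an augmenting alternating path from $s$ to $t$, and off-path vertices are handled by blue edges). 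Hence $\E[\nu(H')]=n+p$ up to an error $O(mn/2^k)$, where $p$ is the $s$--$t$ connectivity probability in a random subgraph of $G$, and $p$ is recovered exactly by rounding. The key idea you are missing is to engineer $H$ so that the random matching number is almost deterministic (the blue-edge baseline) and the residual single bit encodes a $\sharpP$-hard event; your interpolation route never isolates such a clean signal.
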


\begin{proof}
The proof is by reduction from the $s$-$t$ connectivity problem in
directed graphs.  Formally,
\stconn~is the following problem: Given a directed graph $H$ with two
distinguished vertices 
$s,t$, what is the probability that there is an $s$-$t$ path in a random
edge subgraph of $H$?
Valiant shows this problem is $\sharpP$-complete \cite{valiant}.

Assume that $G$ has vertex set $[n]\cup \{s, t\}$ 
and $m$ edges. We construct a bipartite
graph $H$ where the vertex set is $L \cup R$, where $L = \{s\} \cup
[n]$ and $R = [n] \cup \{t\}$.  
When we speak of an edge $(i,j)$ in $H$, we always mean with
$i \in L$ and $j \in R$. 
If there is a directed edge $(i,j)$
in $G$, we add an edge between $i$ on the left and $j$ on the right in
$H$. We call these red edges.
For every $i \in \{1,\ldots,n\}$, we add $k$ parallel edges between the
two copies of $i$
($k$ will be a large polynomial in $m, n$). We call these blue edges.

Consider picking a random subgraph $H'$ of $H$.
Except with probability $m/2^k$ (over the choice of blue edges), at
least one blue edge of the form $(i,i)$ survives for every $i \in [n]$. 
Conditioned on this event, the maximum matching
in $H'$ has size at least $n$.
Further, a matching of size $n+1$ exists if and only if the subgraph
of $G$ that corresponds to the surviving red edges has an $s$-$t$ path $P$
--- the red edges corresponding to $P$ match $s \in L$, $t \in R$,
and both copies of all vertices internal to $P$ (vertices not on $P$
are matched with blue edges).

Let $p$ denote the probability that a random subgraph $G'$ of $G$
contains an $s$-$t$ path. Let the random variable $M$ denote the size of the maximum
matching in $H'$. The observations above imply that
\begin{align*}
\expect[H']{M} & \leq p(n+1) + (1- p)n = n +p\\
\expect[H']{M} & \geq (p - \frac{m}{2^k})(n+1) + (1- p)n = n + p -
\frac{mn}{2^k}.
\end{align*}
Note that $p$ is always an integer multiple of $1/2^m$. Therefore if
we choose $k$ sufficiently large (bigger than $mn$), then we can recover $p$
by subtracting $n$ from $\expect[H']{M}$ and rounding up the
remainder to the form $c/2^m$ for some integer $c$.
\end{proof}

Combining Corollary~\ref{cor:main2} and Theorem~\ref{t:sm} gives the
following.
\begin{corollary}\label{cor:sm}
Unless the polynomial hierarchy collapses, there is no \gbt for the
single-minded bidder setting.
\end{corollary}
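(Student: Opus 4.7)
The plan is to derive Corollary~\ref{cor:sm} as an immediate consequence of two results already established in the paper: Theorem~\ref{t:sm}, which gives $\sharpP$-hardness of \optwel for the single-minded setting, and Corollary~\ref{cor:main2}, which converts such a hardness result into a conditional impossibility of a \gbt. There is nothing new to prove; I just need to check that the inputs to Corollary~\ref{cor:main2} are exactly what Theorem~\ref{t:sm} delivers.

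Concretely, I would first note that Theorem~\ref{t:sm} actually proves something stronger than the bare statement ``\optwel is $\sharpP$-hard for single-minded bidders'': the hardness holds even for the highly restricted subclass in which every bundle has size $2$, every player has just two possible valuations, and the distribution over valuations is uniform. Since any algorithm solving \optwel on the full single-minded setting solves it on this subclass as well, the $\sharpP$-hardness trivially lifts to the full single-minded setting. Hence the hypothesis of Corollary~\ref{cor:main2} is satisfied for $\Setting$ equal to the single-minded setting.

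Applying Corollary~\ref{cor:main2} then yields exactly the desired conclusion: if a \gbt existed for the single-minded setting, then by the welfare-objective variant of Theorem~\ref{t:main} one could solve \optwel in $\PNP$, which combined with the $\sharpP$-hardness from Theorem~\ref{t:sm} would give $\sharpP \subseteq \PNP$, and then by Toda's theorem the polynomial hierarchy would collapse to $\PNP$. Contrapositively, no such \gbt exists unless the polynomial hierarchy collapses.

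There is no real obstacle in this step; all the technical work is already encapsulated in Theorem~\ref{t:sm} (the $\sharpP$-hardness reduction from \stconn, using blue ``identity'' edges with high multiplicity to force a size-$n$ matching baseline so that the expected maximum matching size reveals the $s$-$t$ connectivity probability) and in Theorem~\ref{t:main}/Corollary~\ref{cor:main2} (the ellipsoid-with-membership-oracle argument that places \optwel in $\PNP$ whenever a \gbt is available). The proof of Corollary~\ref{cor:sm} itself is therefore just a one-line citation of these two results.
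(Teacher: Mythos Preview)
Your proposal is correct and matches the paper's own argument exactly: the paper simply states that combining Corollary~\ref{cor:main2} with Theorem~\ref{t:sm} yields the result. There is nothing to add beyond that one-line citation.
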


%4.5 #P-hard for matchings
%see Noam's email
%non an artiicat of non-matroid, see next example

\subsection{Multi-Item Auctions with Unit-Demand Bidders}

Recall that in a multi-item auctions setting, there is a set $M$ of goods and
each bidder $i$ has a valuation $v_{ij}$ for each item $j \in M$.  The
set $\F \sse 2^{U \times M}$ describes the feasible allocations of
items to bidders.  In a multi-item auction with unit-demand bidders,
the feasible sets $\F$ correspond to the matchings of a complete
bipartite graph $G$ with vertex sets $U$ and $M$.
Given valuations $\vals$,
the welfare-maximizing allocation corresponds to the maximum-weight
matching in $G$ (with each edge weight $w_{ij}$ equal to the valuation
$\val_{ij}$).  Since the reduction in the proof of Theorem~\ref{t:sm}
produces a bipartite graph, the reduction also implies that the
\optwel problem is $\sharpP$-hard for the setting of multi-item auctions
with unit-demand bidders.  Corollary~\ref{cor:main2} then implies the
following.

\begin{corollary}\label{cor:multi}
Unless the polynomial hierarchy collapses, there is no \gbt for the
setting of multi-item auctions with unit-demand bidders.
\end{corollary}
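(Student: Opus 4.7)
The plan is to invoke Corollary~\ref{cor:main2}, reducing the task to proving that the \optwel problem is $\sharpP$-hard for multi-item auctions with unit-demand bidders. The strategy is to adapt the \stconn reduction from the proof of Theorem~\ref{t:sm}: instead of interpreting the bipartite graph $H$ constructed there as encoding single-minded bidders (one per edge), I would reinterpret its two vertex classes directly as the bidders and items of a unit-demand auction. This is the sense in which ``the reduction produces a bipartite graph'' transfers to the unit-demand setting.

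Concretely, given a \stconn instance on a directed graph $G$ with vertex set $[n] \cup \{s,t\}$ and edge set $E$, I would build a unit-demand environment with bidders $L = \{s\} \cup [n]$ and items $R = [n] \cup \{t\}$. For each bidder $i \in L$, the valuation vector $\vvali \in \{0,1\}^R$ is drawn as follows: $v_{ii} = 1$ deterministically for $i \in [n]$ (playing the role of the parallel blue edges in Theorem~\ref{t:sm}, while sidestepping the fact that a unit-demand environment has no literal notion of parallel edges); $v_{ij}$ is independently uniform in $\{0,1\}$ for each directed arc $(i,j) \in E$ with $i \in L$ and $j \in R$; and $v_{ij} = 0$ deterministically otherwise. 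Because values are $0/1$ and each bidder demands at most one item, the welfare-maximizing ex post allocation is simply a maximum-cardinality matching in the bipartite graph of currently-positive $(i,j)$ pairs.

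The argument that the maximum matching equals $n$ plus the indicator of an active $s$-$t$ path carries over verbatim from Theorem~\ref{t:sm}: the deterministic identity edges always contribute a matching of size $n$, and a matching of size $n+1$ exists iff one can simultaneously cover $s$ and $t$, which by a standard alternating-path argument happens exactly when the active random arcs contain an $s$-$t$ path in $G$. Hence the expected optimal welfare equals $n + p$, where $p$ is the $s$-$t$ connectivity probability in a uniformly random edge subgraph of $G$. Since \stconn is $\sharpP$-complete by Valiant's theorem and $p$ is recoverable from $n+p$ (it is a rational multiple of $2^{-|E|}$, as in Theorem~\ref{t:sm}), this gives $\sharpP$-hardness of \optwel for unit-demand multi-item auctions. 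Applying Corollary~\ref{cor:main2} then yields the desired impossibility. The only real obstacle is the absence of parallel edges in the unit-demand formalism, which I circumvent by hard-wiring the identity matching into the valuations; the remainder of the proof is a direct reuse of Theorem~\ref{t:sm}.
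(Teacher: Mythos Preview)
Your approach is exactly what the paper has in mind: reinterpret the two sides $L,R$ of the bipartite graph from Theorem~\ref{t:sm} as the bidder set and the item set of a unit-demand environment, so that optimal welfare equals maximum-matching size, and then invoke Corollary~\ref{cor:main2}. The paper justifies this in a single sentence; your write-up is the more careful version. Replacing the $k$ parallel blue edges by the deterministic entry $v_{ii}=1$ is not merely cosmetic---there is only one scalar $v_{ij}$ per (bidder,\,item) pair in the unit-demand model, so the parallel-edge device of Theorem~\ref{t:sm} cannot be carried over literally---and hard-wiring the identity matching also sharpens the conclusion to the exact identity $\expect{M}=n+p$ in place of the two-sided estimate used there.

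One technicality you (and the paper) skip: under the explicit-prior encoding of Section~\ref{ss:settings}, bidder $i$ has $|V_i|=2^{\mathrm{outdeg}_G(i)}$ possible valuation vectors, so the constructed environment has polynomial size only when the \stconn instance has bounded out-degree. This is easy to arrange (split high-degree vertices using deterministic internal arcs, which your construction already permits), but it should be stated for the reduction to be polynomial-time as written.
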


\subsection{Matroid Environments}

Our final example shows that there are even simple matroid settings
which do not admit \gbts (unless $\sharpP \sse \PNP$).  In a {\em
  graphical matroid} environment, bidders correspond to the edges of
an undirected graph $G=(V,E)$. 
The feasible sets correspond to the
acyclic subgraphs of $G$, so the welfare-maximizing outcome
corresponds to a maximum-weight spanning forest.
Using the same valuation distributions as in Section~\ref{ss:sm},
%To sample from the prior, we pick a
%random edge subgraph of $G'$ of $G$ which corresponds to bidders with non-zero
%valuations. 
%Thus maximizing welfare corresponds to
%picking the maximum cardinality forest in $G'$. Thus 
the \optwel problem becomes that of computing the expected size of a
spanning forest --- or equivalently, 
the expected number of connected components --- of a random subgraph
$G'$ of $G$.

%The point of this example is mathematical rather than economic.

%I think a message of our work is that (exact) Border's-type theorems
%cannot exist much beyond the special cases where they are already
%known.  Making this precise involves proving negative results for
%various settings ``just beyond'' where we have positive results.  A
%standard hierarchy of single-parameter downward-closed settings puts
%single-item auctions and digital goods as the simplest problems, then
%k-unit auctions with additive or unit-demand bidders (where there are
%exact Border's theorems), then matroids of various types, then
%non-matroids of various types.

%So the point of this example is that negative results kick in already
%for ``one of the simplest asymmetric matroids''.  I'd be fine with
%swapping this out for some other matroid example that makes the same
%point.

%Another interesting technical point is that Jason et al (EC 12) give
%an explicit linear characterization of the interim allocation polytope
%for every matroid, in terms of the expected rank of various
%submatroids.  This does not contradict our results because the linear
%inequalities cannot always be evaluated in poly-time --- while
%separation is ``just'' submodular function minimization, the
%corresponding value oracle involves computing the expected rank of a
%random submatroid and so can be hard.

\begin{theorem}\label{t:matroid}
The \optwel problem is $\sharpP$-hard for the graphical matroid
setting, even when every player has only two possible valuations, and
the valuation distribution is uniform.
\end{theorem}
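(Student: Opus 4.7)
The plan is a polynomial-time Turing reduction from the $\sharpP$-hard problem of undirected two-terminal reliability: given an undirected graph $G=(V,E)$ with distinguished vertices $s,t$, compute the probability $p_{st}$ that $s$ and $t$ lie in the same connected component of the random subgraph $G'$ obtained by keeping each edge independently with probability $1/2$. Recall from the paragraph preceding the theorem that the \optwel oracle for a graphical matroid environment with two-valued uniform valuations returns the expected size of a spanning forest, namely $|V|-E[c(G')]$, where $c(\cdot)$ counts connected components; the oracle therefore effectively computes $E[c(G')]$ on any input graph. I would make two oracle calls, one on $G$ and one on a small modification $H$ of $G$.

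Given $(G,s,t)$ with $|E|=m$, construct a multigraph $H$ by adjoining a single new vertex $w$, together with $k$ parallel edges between $w$ and $s$ and $k$ parallel edges between $w$ and $t$, where $k$ is a polynomial in $m$ to be chosen. Each parallel edge is its own bidder with the uniform $\{0,1\}$ distribution. Let $A$ be the event that at least one $ws$-edge survives (has valuation~$1$) and $B$ the analogous event for the $wt$-edges; then $A$ and $B$ are mutually independent, independent of the surviving subgraph of $G$, and each has probability $1-2^{-k}$. A case analysis on $(A,B)$ yields the pointwise identity
\begin{align*}
c(H') \;=\; c(G') \;+\; \mathbf{1}_{\bar A \cap \bar B} \;-\; \mathbf{1}_{A \cap B}\cdot\mathbf{1}_{\,s \not\leftrightarrow t\text{ in }G'},
\end{align*}
since $w$ contributes an extra isolated component iff neither $A$ nor $B$ holds, and $w$ fuses two distinct $G'$-components iff both hold and $s,t$ were originally in different components of $G'$; otherwise $w$ merely joins an existing component.

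Taking expectations and using independence,
\begin{align*}
E[c(G')] - E[c(H')] \;=\; (1-2^{-k})^{2}(1-p_{st}) \;-\; 2^{-2k}.
\end{align*}
Two invocations of the \optwel oracle on the polynomial-size inputs $G$ and $H$ return $E[c(G')]$ and $E[c(H')]$ as exact rationals of polynomial bit-length, so the left-hand side above is obtained exactly and $p_{st}$ can be recovered in polynomial time by rational arithmetic. This polynomial-time Turing reduction establishes the $\sharpP$-hardness of \optwel for the graphical matroid setting with binary uniform valuations. The step that requires real care is the pointwise identity: all four sign patterns of $(A,B)$ must be checked, the crucial point being that $w$'s effect on the component count decouples cleanly from the randomness inside $G$. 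Combined with Corollary~\ref{cor:main2}, the theorem also precludes a \gbt for the graphical matroid setting unless the polynomial hierarchy collapses.
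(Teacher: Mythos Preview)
Your proof is correct and uses the same reduction source (undirected two-terminal reliability) and the same high-level strategy as the paper: make two \optwel oracle calls, on $G$ and on a small modification $H$, and recover $p_{st}$ from the difference $E[c(G')]-E[c(H')]$. The only difference is in the gadget. The paper simply adds a single edge $(s,t)$ to $G$ (a parallel copy if one is already present): the new edge survives with probability $1/2$, and when it does it merges the components of $s$ and $t$ exactly when they were previously separate, giving $C_1-C_2=(1-p_{st})/2$ directly. Your construction with an auxiliary vertex $w$ and $2k$ parallel edges achieves the same effect but requires the four-case analysis and the parameter $k$ (which, incidentally, plays no real role---any $k\ge 1$ already lets you solve for $p_{st}$ exactly by rational arithmetic). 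So your argument is sound, just a bit more elaborate than necessary.
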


\begin{proof}
We reduce from the \stconn\ problem in undirected graphs, which is also
$\sharpP$-complete~\cite{valiant}.
Given an instance $G$ of \stconn, 
let $C_1$ denote the expected number of connected components in a
random subgraph of~$G$.
Derive $H$ from $G$ by adding the edge $(s,t)$ --- a second copy if it
is already in $G$ --- and 
let $C_2$ denote the expected number of connected components in a
random subgraph of~$H$.
Since
\[ C_1 - C_2 = \frac{1-p}{2},\]
where $p$ is the probability that $s$ and $t$ are connected in a
random subgraph of $G$, we conclude that the \stconn\ problem reduces
to the \optwel problem in the graphical matroid setting.
\end{proof}

%4.4 #P-hard for a graphical matroid
%-thus even where the Jason et al explicit description holds,
% can't have a Border's-type theorem
%see Tim's email for motivation

\begin{corollary}
Unless the polynomial hierarchy collapses, there is no \gbt for the
graphical matroid setting.
\end{corollary}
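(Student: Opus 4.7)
The plan is to simply combine the two ingredients that have just been established. Theorem~\ref{t:matroid} shows that the \optwel problem is $\sharpP$-hard for the graphical matroid setting (via a reduction from undirected $\mathsf{\#stConnectivity}$), and Corollary~\ref{cor:main2} states that any mechanism design setting whose \optwel problem is $\sharpP$-hard cannot admit a \gbt unless $\sharpP \sse \PNP$.

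Concretely, I would argue by contrapositive: suppose for contradiction that a \gbt exists for the graphical matroid setting. Then by (the proof of) Theorem~\ref{t:main}, upgraded to welfare objectives as explained immediately before Corollary~\ref{cor:main2}, the \optwel problem for this setting lies in $\PNP$. Combined with the $\sharpP$-hardness given by Theorem~\ref{t:matroid}, this would imply $\sharpP \sse \PNP$, which by Toda's theorem forces the polynomial hierarchy to collapse to $\PNP$.

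There is essentially no obstacle here: the corollary is a direct instantiation of Corollary~\ref{cor:main2} with $\Setting$ being the graphical matroid setting, and the required hypothesis is exactly what Theorem~\ref{t:matroid} supplies. The only thing worth double-checking is that the graphical matroid setting, with the two-valuation uniform distributions used in Theorem~\ref{t:matroid}, is a legitimate ``setting'' in the sense of Section~\ref{ss:settings}, so that Corollary~\ref{cor:main2} applies verbatim — but this is immediate since single-parameter matroid environments are the standard downward-closed single-parameter family, and restricting to two-point uniform priors only makes the hardness stronger (it restricts the instances on which \optwel must be computed, so any \gbt for the broader setting would still solve \optwel on these instances in $\PNP$).
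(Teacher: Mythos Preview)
Your proposal is correct and matches the paper's approach exactly: the corollary is stated in the paper without an explicit proof, as it follows immediately from combining Theorem~\ref{t:matroid} with Corollary~\ref{cor:main2}, just as the analogous Corollaries~\ref{cor:pp} and~\ref{cor:sm} follow from their respective theorems.
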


\section{Connections to Boolean Function Analysis}\label{s:bool}

In this section we re-interpret our results on the public projects
problem in terms of Boolean function analysis and, conversely, provide a stand-alone analysis of the 
optimal-revenue Boolean public project mechanism based on Boolean function analysis.  

\subsection{From Boolean Public Projects to Boolean Functions}

Let us go back to the revenue maximization problem, $\optrev$, for the Boolean public project 
problem addressed in subsection \ref{ss:pp} and re-derive the characterization of
the optimal revenue from first principles.  Recall that each of our $n$ players has either value $0$ or 
value $a_i$ for the ``positive outcome'' with both possibilities equally 
likely and independent of the others' values.  To convert this setting to a Boolean functions setting 
let us indicate by $x_i=0$ the case that the value of player $i$ is 0 and
by $x_i=1$ the case that the value of player $i$ is $a_i$.  Our prior distribution of 
values for the Boolean public project setting is now translated to the uniform distribution
over the Boolean hypercube $\{0,1\}^n$.  

Let us denote by $f(x_1,\ldots,x_n) \in [0,1]$ the probability of a positive outcome that our 
mechanism gives when the players' values are according to indicators $x_i$.
Let us further denote by 
\[ f^i(x_i) = \E_{x_{-i} \in \{0,1\}^{n-1}} [f(x_i,x_{-i})] = \E_{x \in \{0,1\}^n} [f(x) | x_i].\]
the interim allocation of player $i$ with value indicated
by $x_i$. Now let us denote by $p^i(x_i)$ the interim payment of player $i$ with value $x_i$.  
Our incentive constraints and individual rationality constraints imply
bounds on $p^i$ in terms of the $f^i$ and allow us to characterize exactly the maximum 
payments that are possible for a given interim allocation rule.

\begin{lemma}
The maximum possible interim payments for a Bayesian Incentive 
Compatible and Interim Individually Rational Mechanism for the Boolean Public Projects problem 
with interim allocations given by $f^i$ with $f^i(1) \ge f^i(0)$ are
precisely $p^i(0)=0$ and $p^i(1)=a_i \cdot (f^i(1)-f^i(0))$.  In particular, 
the optimal revenue among such mechanisms is exactly
$\sum_i (a_i/2) \cdot (f^i(1)-f^i(0))$.
\end{lemma}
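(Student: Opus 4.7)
The plan is to treat this as a routine two-type Myerson-style calculation on each player separately, since the revenue decomposes as $\sum_i \tfrac{1}{2}(p^i(0)+p^i(1))$ and the BIC/IIR constraints involve each player's interim quantities only. So I would fix a player $i$, write out the four relevant inequalities in the four scalars $f^i(0), f^i(1), p^i(0), p^i(1)$, and then solve the (one-dimensional) optimization.

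First I would write down the BIC constraints specialized to the two possible types $0$ and $a_i$. When the true value is $0$, reporting $0$ gives utility $-p^i(0)$ and reporting $1$ gives $-p^i(1)$, so BIC yields $p^i(1)\ge p^i(0)$. When the true value is $a_i$, reporting $1$ gives $a_i f^i(1)-p^i(1)$ and reporting $0$ gives $a_i f^i(0)-p^i(0)$, so BIC yields
\[
p^i(1)-p^i(0)\le a_i\bigl(f^i(1)-f^i(0)\bigr).
\]
Then I would add the IIR constraints: type $0$ gives $p^i(0)\le 0$, and type $a_i$ gives $p^i(1)\le a_i f^i(1)$.

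Next I would maximize $p^i(0)+p^i(1)$ subject to these four inequalities. Combining $p^i(1)\le p^i(0)+a_i(f^i(1)-f^i(0))$ with $p^i(0)\le 0$ gives
\[
p^i(0)+p^i(1)\;\le\; 2p^i(0)+a_i\bigl(f^i(1)-f^i(0)\bigr)\;\le\; a_i\bigl(f^i(1)-f^i(0)\bigr),
\]
and equality forces $p^i(0)=0$ and $p^i(1)=a_i(f^i(1)-f^i(0))$. I would then verify that this choice satisfies all four constraints: $p^i(0)=0$ trivially satisfies $p^i(0)\le 0$; the hypothesis $f^i(1)\ge f^i(0)$ makes $p^i(1)\ge 0=p^i(0)$; the second BIC holds with equality; and IIR for the high type holds because $a_i(f^i(1)-f^i(0))\le a_i f^i(1)$ (using $f^i(0)\ge 0$).

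Finally, summing the per-player maxima and multiplying by the prior probability $1/2$ of the high type gives expected revenue $\sum_i (a_i/2)(f^i(1)-f^i(0))$, as claimed. There is no real obstacle here; the only point to be careful about is that the maximization is separable across players (because both the objective and every constraint involves only a single player's interim quantities), so per-player optimization truly yields the global optimum among BIC/IIR mechanisms with the prescribed interim allocations.
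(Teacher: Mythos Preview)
Your proof is correct and follows essentially the same approach as the paper: derive the upper bounds on payments from the IR constraint at type $0$ and the IC constraint at type $a_i$, then verify that the candidate payments $p^i(0)=0$, $p^i(1)=a_i(f^i(1)-f^i(0))$ satisfy all BIC/IIR constraints. The only cosmetic difference is that you frame the argument as maximizing the sum $p^i(0)+p^i(1)$ and explicitly list all four constraints, whereas the paper argues the two coordinate bounds directly and checks feasibility more informally; substantively the reasoning is identical.
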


\begin{proof}
The individual rationality constraint for $x_i=0$ immediately implies $p^i(0) \le 0$ 
since in that case the player gets no utility from ``the bridge''.  Now let us focus on
the incentive constraint for the case $x_i=1$: reporting the truth will result in 
the bridge being built with probability $f^i(1)$ while lying and reporting 0 will
give a probability $f^i(0)$.  Player $i$'s value for telling the truth is thus 
$a_i \cdot (f^i(1)-f^i(0))$ larger than his value from lying, and this difference is the maximum
that the payment $p^i(1)$ can be larger than the payment $p^i(0)$ 
without his utility becoming lower which would violate the incentive constraints.

We now only need to observe that indeed setting $p^i(0)=0$ and $p^i(1)=a_i \cdot (f^i(1)-f^i(0))$ 
does yield an incentive compatible and individually rational
mechanism.  Individual rationality: for $x_0=0$ the value and payment and thus also 
utility are 0; for $x_i=1$ the value is $a_i f^i(1)$ and the payment $p^i(1)$ is lower,
and thus the utility is positive.  Incentives: for $x_i=0$ lying would not increase player 
$i$'s value which is 0, but may only increase his payment; for $x_i=1$ the
payment $p^i(1)$ was chosen exactly so his utility for the truth will exactly match 
his utility from lying, $a_i \cdot f^i(0)$.
\end{proof}

The constraint that $f^i(1) \ge f^i(0)$ for each $i$ must be satisfied by 
every incentive compatible mechanism since otherwise the incentive constraints for 
$x_i=0$ would dictate $p^i(1) \ge p^i(0)$ while the incentive 
constraints for $x_i=1$ would dictate $p^i(1) < p^i(0)$, a contradiction.  We may however,
without loss of generality, optimize over all functions $f$, since one 
can always switch the roles of $0$ and $1$ by "Not"-ing an input bit.  We thus have reduced the
revenue maximization problem to the following problem on functions $f: \{0,1\}^n \rightarrow [0,1]$.

\begin{lemma}\label{l:opt-bool}
The optimal revenue of the Boolean public project problem
is given by 
\begin{align}
\label{eq:opt1}
 \OPT = \max_{f:\{0,1\}^n \rightarrow [0,1]} \sum_{i=1}^n \frac{a_i}{2}(\expect[\xo]{f(x)|x_i =1}- \expect[\xo]{f(x)|x_i = 0}).
\end{align}
\end{lemma}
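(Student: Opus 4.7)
The plan is to combine the preceding lemma with the ``bit-flipping'' observation made just before the statement, in order to remove the monotonicity constraint $f^i(1) \geq f^i(0)$. The preceding lemma shows that for every allocation rule $f:\{0,1\}^n \to [0,1]$ with $f^i(1) \geq f^i(0)$ for all $i$, the largest expected revenue extractable by any BIC and IIR payment rule compatible with $f$ is exactly $\sum_i (a_i/2)(f^i(1) - f^i(0))$. The paragraph preceding the lemma further shows that \emph{every} BIC mechanism must satisfy $f^i(1) \geq f^i(0)$ for all $i$, since otherwise the incentive constraints at $x_i=0$ and $x_i=1$ contradict each other. Taking the max over $f$ in this constrained set therefore gives
\[
\OPT \;=\; \max_{\substack{f:\{0,1\}^n \to [0,1]\\ f^i(1) \,\geq\, f^i(0)\ \forall i}} \;\sum_{i=1}^n \frac{a_i}{2}\bigl(f^i(1)-f^i(0)\bigr).
\]

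Next I would argue that dropping the monotonicity constraint does not change this maximum. One direction is immediate: enlarging the feasible set can only increase the optimum. For the reverse direction, fix any $f:\{0,1\}^n \to [0,1]$, let $S = \{i : f^i(1) < f^i(0)\}$, and define $g:\{0,1\}^n \to [0,1]$ by $g(x) = f(y)$, where $y_i = 1-x_i$ for $i \in S$ and $y_i = x_i$ otherwise. A short unfolding of definitions shows that $g^i(b) = f^i(1-b)$ for $i \in S$ and $g^i = f^i$ otherwise, so $g^i(1) - g^i(0) = |f^i(1) - f^i(0)| \geq 0$ for every $i$. Hence $g$ lies in the constrained set, and
\[
\sum_{i=1}^n \frac{a_i}{2}\bigl(g^i(1) - g^i(0)\bigr) \;=\; \sum_{i=1}^n \frac{a_i}{2}|f^i(1) - f^i(0)| \;\geq\; \sum_{i=1}^n \frac{a_i}{2}\bigl(f^i(1) - f^i(0)\bigr).
\]
Thus every value of the unconstrained objective is dominated by some value of the constrained objective, so the two maxima coincide and the displayed identity follows.

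There is no real obstacle: the substantive content has already been packaged into the preceding lemma, and what remains is essentially cosmetic. The only step requiring any attention is the verification that $g^i(1) - g^i(0) = |f^i(1) - f^i(0)|$, which is simply the observation that conditioning on $x_i = b$ under $g$ corresponds to conditioning on $y_i = 1-b$ under $f$ whenever $i \in S$, so the two conditional expectations swap.
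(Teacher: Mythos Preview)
Your proposal is correct and follows exactly the approach the paper takes: the paper's ``proof'' of this lemma is the paragraph immediately preceding it, which invokes the preceding lemma for the revenue formula under the monotonicity constraint, observes that every BIC mechanism is monotone, and then removes the constraint by ``Not''-ing input bits. Your write-up simply makes the bit-flipping step explicit (defining $g$ and checking $g^i(1)-g^i(0)=|f^i(1)-f^i(0)|$), which is a faithful formalization of the paper's one-line remark.
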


It turns out that characterizing the function $f$ that maximizes this weighted sum of 
differences is not difficult using the basic tools of Boolean function 
analysis (and of course yields the same mechanism as
Myerson's theorem implies).  The rest of this section will continue coming up with this derivation in a leisurely way within the 
Boolean function context.  As expected, this analysis will identify the optimal $f$ to be threshold function (halfspace)
$f(x)=\sign(\sum a_i \cdot (-1)^{x_i+1})$ where $\sign(z)=1$ if $z \geq 0$ and $\sign(z)=0$ otherwise.  
The fact that this is a Boolean function (rather than taking fractional values)
translates to the optimal mechanism being deterministic, and the 
fact that it is a monotone function can be translated to an optimal mechanism that
is truthful in dominant strategies (obtained, as usual, by setting critical payments). 

\subsection{The Chow Parameters of a Boolean Function}

We refer to functions $f:\zo^n \rightarrow [0,1]$ as bounded
functions and the subset of functions $f:\zo^n \rightarrow \zo$ as Boolean functions.
Every bounded function can be viewed as a convex combination of
Boolean functions. Given $f:\zo^n \rightarrow [0,1]$, we define its {\em Chow
  parameters} or degree-$0$ and degree-$1$ Fourier coefficients as\footnote{The reason we use $(-1)^{1 + x_i}$ instead of
      the more common $(-1)^{x_i}$ is that $\hat{f}(i)$ being positive
      implies positive correlation between $x_i$ and $f$ by Equation \eqref{eq:fi}.}
\begin{align*}
\hat{f}(0) & = \expect[\xo]{f(x)};\\
\hat{f}(i) & = \expect[\xo]{f(x)(-1)^{1+x_i}} \ \text{for} \ i \in
    [n].
\end{align*}
We refer to $(\hat{f}(0),\ldots,\hat{f}(n))$ as the {\em Chow vector} of $f$.
Let us define the set
\[ \C_n = \{(c_0,\ldots,c_n)| \exists f:\zo^n \rightarrow [0,1] 
\ s.t. \ c_i = \hat{f}(i) \ \text{for} \ 0 \leq i\leq n \}. \]
Note that the space of feasible Chow vectors is a polytope, since
it is convex, and has finitely many vertices corresponding to the Chow
vectors of Boolean functions. Let us denote this polytope by
$\C_n$. While this polytope is natural in the context of
Fourier analysis, we are not aware of prior work that studies it
explicitly.
%NOTE BE TR: Below is too vague.  Explain what we mean or omit.
%There is a natural generalization where we
%  consider the first $d$ levels of Fourier spectrum our results extend to that setting

%This polytope is closely connected to the public projects revenue maximization
%problem considered in Theorem \ref{t:pp}.  The prior distribution has
%each player $i$ equally likely to have value $0$ and value $a_i$, with
%all players' value independently chosen. We use $x_i \in \zo$ to
%indicate whether player $i$ has valuation $a_i$. We define function
%$f:\zo^n \rightarrow [0,1]$ by setting $f(x)$ to be the probability
%that the bridge is built when the indicator of players valuations is
%given by $x$. 

Note that 
\begin{align}
\hat{f}(0) & = \expect[\xo]{f(x)}\notag\\
& = \frac{1}{2}\expect[\xo]{f(x)|x_i =1} + \frac{1}{2}\expect[\xo]{f(x)|x_i =0}.\\
\hat{f}(i) & = \expect[\xo]{f(x)(-1)^{1 +x_i}} \notag\\
& = \frac{1}{2}\expect[\xo]{f(x)|x_i =1}
- \frac{1}{2}\expect[\xo]{f(x)|x_i =0}. \label{eq:fi}
\end{align}

%These equations relate the space of feasible interim allocation rules (given by the
%conditional allocations) to the polytope $\C_n$. 
%Thus the level 1 Chow parameters capture exactly the terms whose weighted sum we aim to optimize and 
%restate the revenue optimization goal \eqref{eq:opt} as $\sum_i a_i \hat{f}(i)$, and restate
%our goal as.

This lets us reinterpret our results regarding revenue maximization for the public projects problem in the
language of Chow parameters. By comparing Equation \eqref{eq:opt1} 
in the statement of lemma \ref{l:opt-bool} and \eqref{eq:fi}, we see that
the problem of maximizing revenue is equivalent to maximizing the
weighted sum of the Chow parameters of a function. 

\begin{align}
\label{eq:chow}
 \OPT = \max_{f:\{0,1\}^n \rightarrow [0,1]} \sum_{i=1}^n a_i \hat{f}(i).
\end{align}

We briefly explain why these parameters are of interest in the analysis
of Boolean functions. Recall that $\sign(z)=1$ if $z \geq 0$ and $\sign(z)=0$ otherwise. We say that
a function $h:\zo^n \rightarrow \zo$ is a halfspace if there exist
real numbers $a_0,\ldots,a_n$ such that
\[ h(x) = \sign(a_0 + \sum_{i=1}^n a_i(-1)^{1+ x_i}). \]
We may assume that $a_0 + \sum_{i=1}^n a_i(-1)^{1+ x_i}$ never
vanishes on $\zo^n$.
An elegant result of Chow \cite{Chow61} implies that the
the Chow parameters of a halfspace identify it uniquely in the set of
all Boolean functions, and in fact all Bounded functions. 

\begin{theorem}[Chow's Theorem ~\cite{Chow61}]
Let $h:\zo^n \rightarrow \zo$ be a halfspace. If $f:\zo^n
\rightarrow [0,1]$ satisfies $\hat{f}(i) = \hat{h}(i)$ for $0 \leq i
  \leq n$, then $f(x) = h(x)$ for all $x \in \zo^n$.
\end{theorem}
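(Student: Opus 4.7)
The plan is to use the defining linear function $\ell(x) = a_0 + \sum_{i=1}^n a_i (-1)^{1+x_i}$ of the halfspace $h$ as a ``witness'' that separates $h$ from any bounded function with matching Chow parameters. Let $g = h - f : \zo^n \to [-1,1]$. Since $\hat{f}(i) = \hat{h}(i)$ for $0 \le i \le n$, and the Chow parameters depend linearly on the function, we have $\hat{g}(i) = 0$ for $0 \le i \le n$. The goal is to deduce $g \equiv 0$.

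The key observation is pointwise: by the definition of the halfspace, $h(x) = 1$ precisely when $\ell(x) \ge 0$, and $h(x) = 0$ precisely when $\ell(x) < 0$ (the assumption that $\ell$ never vanishes on $\zo^n$ makes this clean). In the first case, $g(x) = 1 - f(x) \ge 0$, so $g(x)\ell(x) \ge 0$; in the second case, $g(x) = -f(x) \le 0$ and $\ell(x) < 0$, so again $g(x)\ell(x) \ge 0$. Hence $g(x)\ell(x) \ge 0$ for every $x \in \zo^n$.

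Next I would compute $\expect[\xo]{g(x)\ell(x)}$ in two ways. On the one hand, it is nonnegative by the pointwise inequality just established. On the other hand, writing $\chi_0(x) = 1$ and $\chi_i(x) = (-1)^{1+x_i}$, the functions $\chi_0,\chi_1,\ldots,\chi_n$ are orthonormal under the uniform-distribution inner product, and $\expect[\xo]{g(x)\chi_i(x)} = \hat{g}(i) = 0$ for each $i$. Expanding $\ell$ in this basis gives
\[
\expect[\xo]{g(x)\ell(x)} \;=\; a_0 \hat{g}(0) + \sum_{i=1}^n a_i \hat{g}(i) \;=\; 0.
\]
So the nonnegative quantity $g(x)\ell(x)$ has zero average, forcing $g(x)\ell(x) = 0$ for every $x \in \zo^n$. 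Since $\ell(x) \ne 0$ on $\zo^n$, we conclude $g(x) = 0$, i.e.\ $f(x) = h(x)$, for every $x$, as desired.

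The argument has no real obstacle; the only subtle point is to make sure that ``$\ell$ never vanishes on $\zo^n$'' is either assumed (as the excerpt does) or arranged by a small perturbation of $a_0$ that doesn't change the halfspace. Once that is in hand, the proof is just the observation that $\ell$ simultaneously (i) witnesses the sign of $h - f$ pointwise and (ii) is orthogonal to $h - f$ because only the degree-$\le 1$ Fourier coefficients of $h - f$ matter for $\langle h - f, \ell\rangle$, and those vanish by hypothesis.
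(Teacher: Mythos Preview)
Your proof is correct and is exactly the standard argument for Chow's theorem. The paper does not actually give its own proof of this statement; it merely cites \cite[Theorem~5.1]{O'Donnell} and remarks that the proof there extends from Boolean $f$ to bounded $f$. The argument you wrote is precisely that proof, and it is also essentially the same computation the paper carries out in its proof of Lemma~\ref{lem:chow}: writing $a\cdot c = \expect[\xo]{f(x)a(x)}$, observing $f(x)a(x) \le \sign(a(x))a(x)$ pointwise with equality forced when $a(x)\neq 0$, and using that $a(x)$ is affine so only the degree-$0$ and degree-$1$ coefficients of $f$ enter. So your approach and the paper's (implicit) approach coincide.
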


Chow's theorem is usually stated assuming that $f$ is Boolean, but
the proof \cite[Theorem 5.1]{O'Donnell} also applies to the bounded
case. Chow's argument does not give an algorithm to reconstruct a
halfspace from its Chow parameters; this problem is known as the Chow parameters problem. It was solved 
recently by \citet{OS08} and subsequently improved in
\citet{DDFS}. Both results start from approximations to the Chow
parameters, and return a halfspace that is close in Hamming distance
to the target halfspace (with exactly the right Chow parameters). 

Further motivation comes from the fact that for
monotone functions, $2\hat{f}(i)$ equals the influence of variable
$x_i$, and hence $2\sum_i\hat{f}(i)$ equals the average sensitivity of
the function $f$. For $n$ odd, let $\Maj:\zo^n \rightarrow \zo$ denote the Majority
function. It is known \cite[Theorem 2.33]{O'Donnell} that for  all Boolean functions, 
\begin{align}
\label{eq:monotone}
\sum_{i=1}^n \hat{f}(i) \leq \sum_{i=1}^n \hat{\Maj}(i) =
\left( \sqrt{\frac{2}{\pi}}+o(1)\right)\sqrt{n}
\end{align}
which implies that the Majority function has the highest average sensitivity among
all monotone functions\footnote{Going back the the Boolean public projects problem, this would be an estimate of the
optimal revenue for the case where all $a_i=1$.} . For more on Chow parameters and their significance, we refer the
reader to \cite[Chapter 5]{O'Donnell}. 

\eat{

\subsection{Optimal Revenue for Boolean Public Projects Revisited}
\label{chow-bpp}

At this point we proceed with our
%We identify the optimal BIC and IIR mechanism for the Boolean public
%projects problem and compute its revenue by adapting an argument from
%the analysis of Boolean functions, thus giving an 
alternate stand-alone proof of
Lemma \ref{lem:key}. 
By now, lemma \ref{l:opt-bool} and the formulation in equation \ref{eq:chow} we have reduced to the following equivalent form.

\vspace{.1in}

\begin{prevproof}{Lemma}{lem:key}
(Second version.) Let $x_i \in \{0,1\}$ be the variable indicting whether player $i$'s
value is $0$ ($x_i=0$) or $a_i$ ($x_i=1$). We show that the optimal
revenue is given by 
\begin{align}
\label{eq:opt}
 \OPT = \max_{f:\{0,1\}^n \rightarrow [0,1]} \sum_{i=1}^n \frac{a_i}{2}(\expect[\xo]{f(x)|x_i =1}- \expect[\xo]{f(x)|x_i = 0}).
\end{align}

We first show that the right-hand side upper bounds $\OPT$.
Let $f(x)$ denote the probability of ``building the bridge'' when
players' values have the indicator vector $x$. 
By the IIR constraint for the case
the expected payment of player $i$ when $x_i = 0$ is nonpositive.
Applying the  BIC constraint for the case $x_i=1$, we see that player
$i$'s payment in this case is at most  
\[ a_i \cdot (\expect[\xo]{f(x)|x_i =1}- \expect[\xo]{f(x)|x_i = 0})\]
This shows that the right-hand side of~\eqref{eq:opt} is an upper
bound on $\OPT$. 

To show that this upper bound is achievable, we identify the function
$f$ that maximizes this expression. Note that 
\begin{align*}
\expect[\xo]{f(x)(-1)^{1 +x_i}} = \frac{1}{2}\expect[\xo]{f(x)|x_i =1}
- \frac{1}{2}\expect[\xo]{f(x)|x_i =0}. 
\end{align*}
Hence we have
\begin{align*}
& \sum_{i=1}^n \frac{a_i}{2} (\expect[\xo]{f(x)|x_i =1}- \expect[\xo]{f(x)|x_i = 0})\\
&  = \sum_{i=1}^na_i\expect[\xo]{(-1)^{1+x_i}f(x)}\\
&  = \expect[\xo]{f(x)\left(\sum_{i=1}^na_i(-1)^{1 + x_i}\right)}.
\end{align*}
To maximize this quantity, define $\sign(z)=1$ if $z \geq 0$ and
$\sign(z)=0$ otherwise.  Let  
\[ f^*(x) = \sign\left(\sum_{i=1}^na_i(-1)^{1 + x_i}\right), \]
so that
\begin{align*}
f^*(x)\left(\sum_{i=1}^na_i(-1)^{1 + x_i}\right) = \begin{cases}
  \left|\sum_{i=1}^na_i(-1)^{1 + x_i}\right)| & \text{if}
  \ \left(\sum_{i=1}^na_i(-1)^{1 + x_i}\right) \geq 0\\
0 &  \ \text{otherwise.}
\end{cases}
\end{align*}
It is easy to see that for every function $f:\zo^n \rightarrow [0,1]$ 
and $x \in \zo^n$, we have
\[ f(x)\left(\sum_{i=1}^na_i(-1)^{1 + x_i}\right) \leq
f^*(x)\left(\sum_{i=1}^na_i(-1)^{1 + x_i}\right); \] 
hence, $f^*$ is the optimizer among all functions with range $[0,1]$. Further, since $\sum_{i=1}^na_i(-1)^{1 + x_i}$ is symmetric about $0$, it follows that
\[ \expect[\xo]{f^*(x)\left(\sum_{i=1}^na_i(-1)^{1 + x_i}\right)} =
\frac{1}{2}\expect[y \sim \pmo^n]{|a \cdot y|} = \frac{K(a)}{2}, \] 
where $K(a)$ is the Khintchine constant of $a$.

We say that player $i$ is {\em pivotal} for $f^*$ on input $x$ if $x_i =1$
and $f^*(x) =1$ but flipping the value of $x_i$ (from $1$ to $0$) changes the value of $f^*(x)$
(from $1$ to $0$). It is easy to see that the probability that player
$i$ is pivotal is exactly
\[ p_i = \frac{1}{2}(\expect[\xo]{f(x)|x_i =1} - \expect[\xo]{f(x)|x_i  =0}). \]
Consider the payment rule specifying that a player with value $0$ pays $0$ and a player 
with value $a_i$ pays exactly $a_i$ {\em
whenever he is pivotal} and $0$ otherwise. We get revenue exactly
$\sum_i a_ip_i = \OPT$. This mechanism turns out to be not only BIC
and IIR, but even incentive compatible in the dominant strategy case
as well as  ex-post individually rational, both of which are easily verified
directly from the description. 
\end{prevproof}

}%whole subsection eaten

\subsection{Optimization over $\C_n$}

At this point we proceed with our analysis in the setting of Chow parameters which, on one hand, applies our hardness results 
to the problems of membership an optimization over the polytope $\C_n$ (Theorem \ref{thm:chow}), and on the other hand completes the promised stand-alone alternative
proof of 
Lemma \ref{lem:key}. 
By now, using lemma \ref{l:opt-bool} and the formulation in equation \ref{eq:chow} 
we have shown the equivalence of the revenue optimization problem for the 
Boolean public project setting to that of 
maximizing the weighted sum of the Chow parameters of a 
bounded function, equivalently maximizing a linear function over the polytope $\C_n$.

Given $a  = (a_0,\ldots,a_n)\in \R^{n+1}$, it defines a linear
objective function over $\C_n$ given by $a\cdot c$. 
To analyze the linear function corresponding to a vector $a \in
\R^{n+1}$, we  define the affine function $a(x) = a_0 +
\sum_{i=1}^na_i(-1)^{1+x_i}$ mapping $\zo^n$ to $\R$.  
When $a_0 =0$, the first part of Lemma \ref{lem:chow} below is
essentially a restatement of Lemma \ref{lem:key}, whereas the second part
replaces the role of Myersons's theorem for identifying the optimal function in \ref{lem:key}.

\begin{lemma}
\label{lem:chow}
Given $a \in \R^{n+1}$, we have
\begin{align}
\label{eq:chow2} 
\max_{c \in \C_n} a \cdot c = \expect[\xo]{\sign(a(x))a(x)} = \frac{1}{2}(K(a) + a_0).
\end{align}
Equality is attained at the Chow vectors corresponding to functions
$f:\zo^n \rightarrow [0,1]$ such that $f(x) = \sign(a(x))$ whenever $a(x) \neq 0$. 
\end{lemma}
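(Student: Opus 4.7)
Proof plan. The plan is to reduce the optimization over $\C_n$ to a pointwise maximization, and then identify the optimal value with $\frac{1}{2}(K(a)+a_0)$ by the symmetry of the Rademacher distribution.

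\textbf{Step 1 (rewriting the objective).} First I would unpack $a\cdot c$ for a vector $c \in \C_n$ corresponding to some bounded function $f:\zo^n \to [0,1]$. Using the definitions of the Chow parameters together with linearity of expectation,
\[
a \cdot c \;=\; a_0 \hat f(0) + \sum_{i=1}^n a_i \hat f(i)
\;=\; \expect[\xo]{f(x)\Bigl(a_0 + \sum_{i=1}^n a_i (-1)^{1+x_i}\Bigr)}
\;=\; \expect[\xo]{f(x)\,a(x)}.
\]
Since $\C_n$ is exactly the image of the bounded functions under $f \mapsto (\hat f(0),\ldots,\hat f(n))$, maximizing $a\cdot c$ over $c\in\C_n$ is the same as maximizing $\expect[\xo]{f(x)\,a(x)}$ over $f:\zo^n\to[0,1]$.

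\textbf{Step 2 (pointwise maximization).} The expectation $\expect[\xo]{f(x)a(x)}$ is maximized pointwise: on each $x$, the coefficient $a(x)$ is a constant, so the optimal $f$ sets $f(x)=1$ whenever $a(x)\geq 0$ and $f(x)=0$ whenever $a(x)<0$, that is $f(x)=\sign(a(x))$. On inputs with $a(x)=0$ any choice of $f(x)\in[0,1]$ achieves the same contribution (zero), which is exactly the ``$a(x)\neq 0$'' clause in the lemma. The optimal value is thus
\[
\expect[\xo]{\sign(a(x))\,a(x)} \;=\; \expect[\xo]{\max(a(x),0)}.
\]

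\textbf{Step 3 (identification with $\tfrac{1}{2}(K(a)+a_0)$).} Here I would use the elementary identity $\max(u,0)=\tfrac{1}{2}(|u|+u)$ to split
\[
\expect[\xo]{\max(a(x),0)} \;=\; \tfrac{1}{2}\expect[\xo]{|a(x)|} + \tfrac{1}{2}\expect[\xo]{a(x)}.
\]
Since each $(-1)^{1+x_i}$ has mean zero under the uniform distribution, $\expect[\xo]{a(x)}=a_0$. For the absolute-value term, write $y_i=(-1)^{1+x_i}\in\pmo$, so that $y\in\pmo^n$ is uniform; then $a(x)=a_0+a'\cdot y$ with $a'=(a_1,\ldots,a_n)$. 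Because $y$ and $-y$ have the same distribution,
\[
\expect[\xo]{|a(x)|} \;=\; \tfrac{1}{2}\expect[y]{|a_0+a'\cdot y|+|a_0-a'\cdot y|} \;=\; \expect[y]{\max(|a_0|,|a'\cdot y|)},
\]
using $|u|+|v|=2\max(|u+v|/2,|u-v|/2)$ evaluated at $u=a_0$, $v=a'\cdot y$. Finally, introducing an independent Rademacher $y_0\in\pmo$ and conditioning on $(y_1,\ldots,y_n)$,
\[
K(a) \;=\; \expect[y_0,y]{|a_0 y_0+a'\cdot y|} \;=\; \expect[y]{\tfrac{1}{2}|a_0+a'\cdot y|+\tfrac{1}{2}|{-a_0}+a'\cdot y|} \;=\; \expect[y]{\max(|a_0|,|a'\cdot y|)}.
\]
Comparing, $\expect[\xo]{|a(x)|}=K(a)$, and hence $\expect[\xo]{\max(a(x),0)}=\tfrac{1}{2}(K(a)+a_0)$.

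The only mildly subtle step is the last one: recognizing that $\expect[\xo]{|a(x)|}$ equals the Khintchine constant of the full $(n{+}1)$-vector $a$, which requires introducing the fictitious Rademacher $y_0$ and exploiting the symmetry $y\stackrel{d}{=}-y$. Everything else is pointwise optimization and a one-line identity.
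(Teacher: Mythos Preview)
Your proof is correct and follows essentially the same route as the paper's: both rewrite $a\cdot c$ as $\expect[\xo]{f(x)\,a(x)}$, maximize pointwise to obtain $\expect[\xo]{\sign(a(x))a(x)}$, and then use $\max(u,0)=\tfrac12(|u|+u)$ together with the symmetry of the Rademacher distribution to show $\expect[\xo]{|a(x)|}=K(a)$. One small slip: in Step~3 the identity $|u|+|v|=2\max(|u+v|/2,|u-v|/2)$ should be evaluated at $u=a_0+a'\cdot y$ and $v=a_0-a'\cdot y$ (not $u=a_0$, $v=a'\cdot y$); your displayed conclusion is correct either way.
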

\begin{proof}
Let $c =(\hat{f}(0),\ldots,\hat{f}(n))$. We have
\begin{align*}
a\cdot c & = a_0\hat{f}(0) + \sum_{i=1}^na_i\hat{f}(i)\notag\\
& = a_0\expect[\xo]{f(x)} + \sum_{i=1}^na_i \expect[\xo]{f(x)(-1)^{1+ x_i}}\notag\\
& = \expect[\xo]{f(x)(a_0 + \sum_{i=1}^na_ix_i)}\notag\\
& = \expect[\xo]{f(x)a(x)}.
\end{align*}
Since $0 \leq f(x) \leq 1$, 
\[ f(x)a(x) \leq \sign(a(x))a(x).\]
If $a(x) \neq 0$, this holds with equality iff $f(x) =
\sign(a(x))$, whereas if $a(x) = 0$, $f(x)$ can take an arbitrary
value in $[0,1]$. This gives us 
\[ \max_{c \in \C_n} a \cdot c = \expect[\xo]{\sign(a(x))a(x)}\]
and characterizes the functions that achieve equality. To complete the
proof, we just compute this expectation. This is a routine calculation
which we defer to Appendix \ref{app:exp}.
\end{proof}

\begin{theorem}
\label{thm:chow}
The problems of linear optimization over $\C_n$ and deciding membership in $\C_n$ are $\sharpP$-hard. 
\end{theorem}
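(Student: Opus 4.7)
The proof splits into two parts. For linear optimization, the $\sharpP$-hardness is essentially immediate from the two preceding lemmas: given an integer vector $a = (a_1,\dots,a_n)$ of polynomial bit-length, the extended vector $\tilde a = (0, a_1, \dots, a_n) \in \R^{n+1}$ defines a linear objective $c \mapsto \tilde a \cdot c$ on $\C_n$ whose optimum, by Lemma~\ref{lem:chow}, equals $\tfrac{1}{2}K(a)$. A solver for linear optimization over $\C_n$ would therefore compute $K(a)$ in polynomial time, contradicting the $\sharpP$-hardness of the Khintchine constant established in Lemma~\ref{l:wb}.

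For membership in $\C_n$, the plan is to Turing-reduce the (now $\sharpP$-hard) linear optimization problem to the membership decision problem via the ellipsoid method, in direct analogy with the reduction from $\optrev$ to $\mem$ in the proof of Theorem~\ref{t:main}. The polytope $\C_n$ is well suited to this approach: it is contained in $[-1,1]^{n+1}$, and each of its vertices is the Chow vector of some Boolean $f:\zo^n \rightarrow \zo$, whose coordinates are rationals with denominator dividing $2^n$. Moreover $\C_n$ is full-dimensional in $\R^{n+1}$, witnessed by the $n+2$ affinely independent Chow vectors arising from $f \equiv 0$, $f \equiv 1$, and the dictators $f_i(x) = x_i$ for $i \in [n]$; the centroid of these points is an explicit interior point whose distance to the boundary can be lower-bounded using the $2^{-n}$ quantization of vertex coordinates. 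These ingredients are precisely what the Gr\"otschel-Lov\'asz-Schrijver weak-membership version of the ellipsoid method \cite{GLS} requires, so it yields a polynomial-time Turing reduction from linear optimization over $\C_n$ to membership in $\C_n$. Combined with the first part, membership in $\C_n$ is $\sharpP$-hard.

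The one step that needs genuine care is passing from the weak membership/weak optimization primitives of the ellipsoid method to the exact versions used in the $\sharpP$-hardness statement: a polynomial-precision approximation to $\max_{c \in \C_n} \tilde a \cdot c$ must be rounded to its exact value, using the fact that every vertex of $\C_n$ has coordinates that are integer multiples of $2^{-n}$. This is the standard rounding argument for ellipsoid-based reductions on rational polytopes, and I expect it to be the main (but routine) technical obstacle.
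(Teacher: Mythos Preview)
Your proposal is correct and follows essentially the same two-step approach as the paper: reduce Khintchine-constant computation to linear optimization over $\C_n$ via Lemma~\ref{lem:chow} and Lemma~\ref{l:wb}, and then reduce optimization to membership via the ellipsoid method. The only difference is in the level of detail for the second step: the paper simply notes that $0^{n+1}\in\C_n$ and cites \cite[p.~189]{S86}, whereas you work harder, exhibiting an explicit interior point (via affinely independent Chow vectors) and spelling out the weak-to-exact rounding using the $2^{-n}$ rationality of the vertices. Your extra care is arguably warranted, since $0^{n+1}$ is a vertex of $\C_n$ rather than an interior point, but both arguments ultimately rest on the same Gr\"otschel--Lov\'asz--Schrijver machinery.
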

\begin{proof} 
The hardness of linear optimization follows from Lemmas \ref{lem:chow}
and \ref{l:wb}: if we can solve linear optimization efficiently, we
can compute $K(a)$.

The hardness of membership is proved using a similar argument to
Theorem \ref{t:main}. 
$0^{n+1}$ is a feasible point in $\C_n$.
Hence the ellipsoid method can also be used to 
reduce linear optimization to a polynomial number of
invocations of a membership oracle (see~\cite[P.189]{S86}).
Hence if we can decide membership in the polytope $\C_n$,
then we can solve linear optimization over $\C_n$ using the Ellipsoid
algorithm, which we just showed is $\sharpP$-hard. 
\end{proof}

This hardness result rules out a {\em nice} characterization
of the polytope $\C_n$, in the spirit of Definition \ref{d:gbt}. We
believe this negative result is interesting in the context of the Chow
parameters problem, and sheds light on why the exact version of the
problem, (where the goal is to find a function whose Chow vector
equals the input) is hard.

\citet{OS08} observe that the {inverse problem} of
computing $\hat{f}(0)$ for a given halfspace is $\sharpP$-complete, which 
implies that given a target Chow vector, it is hard to verify if an
input halfspace has exactly these Chow parameters. This can be viewed
as evidence that the exact version of the Chow parameters problem is intractable.

Assume that we drop the requirement that the output be a halfspace and are willing to settle for any bounded
Boolean function that has some compact representation, which 
lets us evaluate its Chow vector exactly. Can we now hope to solve the Chow
problem exactly? Theorem \ref{thm:chow} says this is unlikely,
since such an algorithm with allow us to test membership in $\C_n$:
run the algorithm, and check the function output by it.

Let us return to the question that we have started the paper with:
For which vectors $(p_0, p_1, ..., p_n)$ does there exist a probability space with events $E,X_1,\ldots,X_n$,
with $X_1,\ldots,X_n$ independent and $\Pr[X_i]=1/2$ for all $i= \in [n]$,
such that $p_0 = \Pr[E]$ and $p_i= \Pr[E|X_i]$ for all $i \in [n]$?
Let $x_i$ be the indicator of the event $X_i$. Define $f:\zo^n
\rightarrow [0,1]$ by setting $f(x)$ to be the
probability of $E$ given $x$. The above problem reduces to testing
whether a vector of Chow parameters is feasible, which is $\sharpP$-hard
by Theorem \ref{thm:chow}.

\subsection{The vertices of $\C_n$ are halfspaces}

Lastly, our results yield a characterization of the
vertices of the polytope $\C_n$. Recall that a point $c \in \C_n$ is a
vertex iff there exists $a \in \R^{n+1}$ such that 
\begin{align}
\label{eq:a} 
a\cdot c > a\cdot c' \  \ \forall c' \neq c \in \C_n.
\end{align}

\begin{theorem}
\label{thm:polytope}
The vertices of $\C_n$ are in 1$-$1 correspondence with halfspaces.
\end{theorem}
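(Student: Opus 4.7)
The plan is to combine Lemma \ref{lem:chow}, which characterizes the maximizers of linear objectives over $\C_n$, with Chow's theorem, which says the Chow vector uniquely identifies a halfspace. Together these will give mutually inverse maps between vertices of $\C_n$ and halfspaces on $\zo^n$.

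First I would show that every vertex of $\C_n$ is the Chow vector of some halfspace. Let $c \in \C_n$ be a vertex witnessed by $a \in \R^{n+1}$, so $a\cdot c > a\cdot c'$ for every other $c' \in \C_n$. By Lemma \ref{lem:chow}, the maximizers of $a\cdot c'$ on $\C_n$ are precisely the Chow vectors of bounded functions $f:\zo^n\to[0,1]$ with $f(x)=\sign(a(x))$ wherever $a(x)\neq 0$. The key observation, and the main obstacle to overcome, is that if $a(x^*)=0$ for some $x^*$, then we could choose $f(x^*)$ arbitrarily in $[0,1]$; two such choices yield two distinct maximizing bounded functions whose Chow vectors differ (their $\hat f(0)$ coordinates differ by a nonzero multiple of $2^{-n}$), contradicting the uniqueness of the maximizer at the vertex $c$. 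Hence $a(x)\neq 0$ for every $x\in\zo^n$, and the unique maximizer must be $h(x)=\sign(a(x))$, which is a halfspace with $\hat h = c$.

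Conversely, given any halfspace $h(x)=\sign(a(x))$ (with $a(x)\neq 0$ on $\zo^n$, an assumption the excerpt notes is WLOG), I would apply Lemma \ref{lem:chow} directly with this $a$: since $a$ is nowhere zero on $\zo^n$, the characterization in the lemma says that $h$ is the unique bounded function maximizing $a\cdot \hat f$, so $\hat h$ is the unique maximizer of $a\cdot c$ over $c\in\C_n$, and hence $\hat h$ is a vertex. These two constructions are mutually inverse on the level of Chow vectors, and Chow's theorem guarantees that distinct halfspaces produce distinct Chow vectors, yielding the claimed bijection.
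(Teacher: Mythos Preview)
Your proposal is correct and follows essentially the same approach as the paper: both directions invoke Lemma~\ref{lem:chow}, and the key step---showing that the witnessing $a$ for a vertex cannot vanish anywhere on $\zo^n$ because otherwise two bounded maximizers with different $\hat f(0)$ would exist---matches the paper's argument (which writes out the two extremal choices $f_0,f_1$ explicitly). Your explicit appeal to Chow's theorem for injectivity is also in line with the paper, which notes the same point in a footnote.
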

\begin{proof}
Consider the halfspace 
\[ h(x) = \sign(a(x))\]
where $a(x) = a_0 + \sum_{i=1}^na_i(-1)^{1+x_i}$ does not vanish at any point in $\zo^n$. Lemma \ref{lem:chow}
shows that the linear function specified by the vector $a =
(a_0,\ldots,a_n)$ is maximized over $\C_n$ uniquely at the Chow vector
of $h(x)$, which is therefore a vertex of $\C_n$.

For the other direction, fix a vertex $c$ of $\C_n$. There exists a linear function specified
by $a \in \R^{n+1}$ such that  $a\cdot c > a\cdot c'$ for  $c' \neq c \in \C_n$.
By Lemma \ref{lem:chow}, a bounded function $f$ with Chow vector
$c$ must satisfy $f(x) = \sign(a(x))$ for all $x$ where $a(x) \neq 0$. 
Thus $f$ has been specified except for points where $a(x)$ vanishes.
We  claim that the choice of $a$ implies that $a(x) \neq 0$ for all 
$x \in \zo^n$. Assume for contradiction that $a$ vanishes at some
subset of $\zo^n$. 
Define Boolean functions $f_1$ and $f_0$ as
\begin{align*}
f_1(x) = \begin{cases}  1 & \ \text{if} \ a(x) =0\\
\sign(a(x)) & \ \text{otherwise}
\end{cases}\\
f_0(x) = \begin{cases}  0 & \ \text{if} \ a(x) =0\\
\sign(a(x)) & \ \text{otherwise}
\end{cases}
\end{align*}

We claim that $\hat{f_1}(0) \neq \hat{f_0}(0)$, thus their Chow
vectors are distinct. This holds because\footnote{Alternately, we can observe that both
  $f_1$ and $f_0$ are halfspaces, and Chow's theorem tells us that
  their Chow vectors are distinct.} 
\[ \expect[\xo]{f_1(x)} - \expect[\xo]{f_0(x)} =  \Pr_{x \in \zo^n}[a(x) =0] > 0. \]
By construction, both $\hat{f_1}$ and $\hat{f_0}$ maximize the objective
function $a$.  This contradicts Equation \eqref{eq:a}.
We conclude that $a(x) \neq 0$ for all $x \in \zo^n$, hence 
$f(x) = \sign(a(x))$ is the unique bounded function with Chow vector $c$.
\end{proof}

%Both the fact that the optimal mechanism was deterministic, and that
%it was dominant-strategy incentive compatible 
%are not coincidences but rather are implied by the single-parameter
%setting \cite{M81,G13}.
%%, the former fact
%%shown by \cite{M81} and the latter by \cite{G13}.

\section*{Acknowledgments}

We would like to thank  Noga Alon, Yang Cai,  Costis Daskalakis, Uri
Feige,  Gil Kalai, Nati Linial, Yuval Peres,  Rocco Servedio, Li-Yang
Tan, and, Matt Weinberg for helpful discussions during various stages
of this work.

\bibliographystyle{chicago}
\bibliography{border}

\appendix

\section{Hardness of computing the Khintchine constant}
\label{a:khintchine}
 
\begin{prevproof}{Lemma}{l:wb}
$\Part$ is a well-known $\NP$-complete problem whose input consists of
$n$ integers $w_1,\ldots,w_n$ and the goal is to split the numbers in
two parts so that their sum is equal. This is equivalent to asking if
there exists $x \in \pmo^n$ such that $w\cdot x =0$ where $w
=(w_1,\ldots,w_n)$. The counting version which we denote $\sPart$ is
the problem of computing
\[ \Pr_{x\in \pmo^n}[w\cdot x =0 ]. \]
It is complete for $\sharpP$. We show that this problem reduces to computing the Khintchine constant.

Given $w$, define the vectors  
\begin{align*} 
a_0 = (2w_1,\ldots,2w_n,0),  \ \ a_1 = (2w_1,\ldots,2w_n, 1).
\end{align*}
For $x \in \pmo^n$, define $x^+ =(x,1)$ and $x^- = (x,-1)$. We observe that
\begin{align}
\label{eq:a0}|a_0\cdot x^+| + |a_0\cdot x^-| &  = 4|w\cdot x|;\\
\label{eq:a_1}|a_1\cdot x^+| + |a_1\cdot x^-| & = |2w\cdot x +1| + |2w\cdot x -1|.
\end{align}
When $w\cdot x =0$, the RHS equals of Equation \eqref{eq:a0} equals $0$, while that of Equation \eqref{eq:a_1} equals $2$.

If $w\cdot x \neq 0$, then $|2w\cdot x| \geq 2$. We then use 
\[ |a + b| + |a - b| = 2\max(|a|,|b|)\]
to conclude that 
\begin{align*}
|a_1\cdot x^+| + |a_1\cdot x^-|  = 4|w\cdot x| = |a_0\cdot x^+| + |a_0\cdot x^-|.
\end{align*}
Finally, observe that we can write
\begin{align*}
K(a_i) = \expect[\xu]{\frac{|a_i\cdot x^+| + |a_i \cdot x^-|}{2}}.
\end{align*}
It follows that
\[ \Pr_{x\in \pmo^n}[w\cdot x =0 ] = K(a_1) - K(a_0).\]
\end{prevproof}

\section{Completing the Proof of Lemma ~\ref{lem:chow}}
\label{app:exp}

Consider the affine function $a(x) = a_0 +
\sum_{i=1}^na_i(-1)^{1+x_i}$ mapping $\zo^n$ to $\R$.  
Our goal is to show that $\expect[\xo]{\sign(a(x))a(x)} = \frac{1}{2}(K(a) + a_0)$.
Observe that
\begin{align*}
(2\sign(a(x)) -1)a(x) = |a(x)|
\end{align*}
Hence 
\begin{align*}
\E_{x \in \zo^n}[|a(x)|] & = 2\E_{x \in \zo^n}[\sign(a(x))a(x)] - \E_{x \in \zo^n}[a(x)] \\
& = 2\E_{x \in \zo^n}[\sign(a(x))a(x)] - a_0
\end{align*}
We will now show that
\[ \E_{x \in \zo^n}[|a(x)|]  = K(a),\]
which implies the claim. Observe that
\begin{align*}
K(a) & = \E_{y \in \pmo^{n+1}}[|a\cdot y|]\\
& = \frac{1}{2}\E_{y \in \pmo^n}[|a_0 + \sum_{i=1}^na_iy_i|] + \frac{1}{2}\E_{x
  \in \pmo^n}[|-a_0 + \sum_{i=1}^na_iy_i|]
\end{align*}
We claim that the two expectations on the RHS are in fact equal to
each other, since
\begin{align*}
\E_{y \in \pmo^n}[|-a_0 + \sum_{i=1}^na_iy_i|] = \E_{y
  \in \pmo^n}[|a_0 -\sum_{i=1}^na_iy_i|] = \E_{y  \in \pmo^n}[|a_0 + \sum_{i=1}^na_iy_i|]
\end{align*}
since $\sum_ia_iy_i$ is an even random variable. Hence we get
\begin{align*}
K(a) & = \E_{y \in \pmo^n}\left[|a_0 + \sum_{i=1}^na_iy_i|\right]\\ 
& = \E_{x \in \zo^n}\left[|a_0 + \sum_{i=1}^na_i(-1)^{1 + x_i}|\right]\\
& = \E_{x \in\zo^n}[|a(x)|]. 
\end{align*}

\end{document}